\newtheorem{proposition}{Proposition}
\newtheorem{theorem}{Theorem}
\newtheorem{lemma}{Lemma}
\newtheorem{corollary}{Corollary}
\DeclareMathAlphabet{\mathpzc}{OT1}{pzc}{m}{it}
\DeclareMathAlphabet{\mathcalligra}{T1}{calligra}{m}{n}
\begin{document}

\newgeometry{textheight=25.1cm, voffset = -20pt}
\begin{titlepage}
\singlespacing

   \title{Assessment Voting in Large Electorates\thanks{We are grateful to Salvador Barber\`{a}, David Basin, Georgy Egorov, Lara Schmid and seminar participants at ETH Zurich for valuable discussions. All errors are our own.}}

	\author{Hans Gersbach \\
		\small CER-ETH -- Center of Economic \\
		\small Research at ETH Zurich and CEPR \\
		\small Z\"urichbergstrasse 18 \\
		\small 8092 Zurich, Switzerland \\
		\small hgersbach@ethz.ch \vspace{5mm}\\
	\and Akaki Mamageishvili \\
		\small CER-ETH -- Center of Economic \\
		\small Research at ETH Zurich  \\
		\small Z\"urichbergstrasse 18 \\
		\small 8092 Zurich, Switzerland \\
		\small amamageishvili@ethz.ch \vspace{5mm}\\
	\and Oriol Tejada \\
		\small CER-ETH -- Center of Economic \\
		\small Research at ETH Zurich  \\
		\small Z\"urichbergstrasse 18 \\
		\small 8092 Zurich, Switzerland \\
		\small toriol@ethz.ch \vspace{5mm}\\
	}
	%\date{This Version: October 2017}
	\date{\normalsize First version: October 2016\\ \vspace{1mm}
			This version:  December 2017}
	\maketitle\thispagestyle{empty}

\vspace{-1cm}
\singlespacing

    \begin{abstract}
     We analyze Assessment Voting, a new two-round voting procedure that can be applied to binary decisions in democratic societies. In the first round, a randomly-selected {number} of citizens {cast their} vote on one of the two alternatives at hand, {thereby} irrevocably {exercising} their right to vote. In the second round, after the results of the first round have been published, the remaining citizens decide whether to vote for one alternative or to abstain. The votes from both rounds are aggregated, and the final outcome is obtained by applying the majority rule{, with ties being broken by fair randomization}. Within a costly voting framework, we show that large electorates will choose the preferred alternative of the majority with high probability, and that average costs will be low. This result is in contrast with the literature on one-round voting, which predicts either higher voting costs (when voting is compulsory) or decisions that often do not represent the preferences of the majority (when voting is voluntary).%s in expectation.

%    \end{abstract}

\medskip 

\noindent {{\bf Keywords:}} voting; referenda; rational behavior

\medskip

\noindent {{\bf JEL Classification:}}  C72; D70; D72

 \end{abstract}

\end{titlepage}

\restoregeometry

\onehalfspacing
%\tableofcontents \clearpage

%\doublespacing

	\setcounter{page}{2}

\section{Introduction}

How can the will of the majority in the {citizenry} be reflected in the outcomes of democratic decisions? Turnout {in elections tends to be significantly} lower than the size of the electorate and thus it is unclear whether {the citizens casting a vote can be trusted to represent} the distribution of preferences in the {entire population. As has been widely argued in the literature, some} citizens may not exercise {their right to vote when voting is costly, with such costs potentially affecting the election outcome
 \citep{ledyard1984pure,palfrey1983strategic,palfrey1985voter}.} There are many reasons why voting may be costly for an individual: going to the polling station requires effort {and} is associated with opportunity costs, the need to understand {some details about the election process} may discourage some citizens to vote, {or some individuals} may be disappointed from repeatedly being in the minority.

When voting is costly, a balance is struck in equilibrium between such cost and the expected benefit of going to the ballot box. Three stylized facts are then predicted by a major strand of the literature on costly voting \citep[see e.g.][]{borgers,mandatory,taylor-yildirim-2,taylor-yildirim-1}.\footnote{Most papers in the costly-voting literature analyze private-value settings. We refer to \cite{ghosal2009costly} for a setting where preferences have both private- and public-value components.} First, if at all, citizens vote for their preferred alternative, and hence no strategic voting occurs.\footnote{This feature obtains even if there are more than two alternatives \citep{polborn} and has empirical support  in settings where voting is voluntary \citep{bhattacharya2014compulsory}.} Second, regardless of the distribution of preferences within the entire citizenry, both alternatives are expected to win with the same probability.\footnote{While sometimes less stark, the \textit{underdog effect}---according to which supporters of the minority alternative turn out in relative terms more than supporters of the majority alternative---is featured by most models. Exceptions include the case where the cost is much smaller for the members of the majority and the case where there is ambiguity about the true preferences of the electorate \citep{taylor-yildirim-2}.} Third, absolute aggregate turnout is bounded from above, regardless of the size of the electorate. This implies that relative turnout decreases with the size of the electorate. The voting procedure for which {these} three properties are derived is the standard and widely-applied one-round voting, which is modeled by having all citizens (simultaneously) decide whether or not to go to the ballot box, and in the former case which alternative to vote for. 

The goal of the present paper is to show that within a costly-voting set-up, it is possible to devise a different voting procedure which is superior to the standard one-round voting in the following sense: the final decision will match the preference of the majority of the population with a probability {arbitrarily} close to one, and, also in expectation, participation costs will be similar to the participation costs in the {(voluntary)} one-round voting procedure. At the same time, the suggested procedure is also superior to compulsory one-round voting, in which case the alternative preferred by the majority is also chosen with high probability, but at a much higher cost. As a matter of fact, we will argue that there is a sense in which the {proposed} voting procedure can be seen as the right mix of voluntary and compulsory one-round voting schemes. 

Instead of having the entire citizenry vote at the same time, we suggest the following two-round procedure, which we call \textit{Assessment Voting} (AV in short):\footnote{For a verbal description, see~\cite{original}.}
\begin{enumerate}
	\item     A number of citizens are randomly selected from the entire population, {all of whom} constitute the \textit{Assessment Group (AG)}.
	\item    All members of  AG (simultaneously) cast a vote for one of the alternatives at hand or abstain.
	\item     The number of votes in favor of either alternative obtained in the first round is made public.
	\item     All citizens who do \textit{not} belong to AG (simultaneously) decide whether to abstain or to vote for either alternative{, and thus} the second voting round takes place.
%	\item     The results of the second round are added up to the results of the first round to yield the decision of the entire citizenry.
	\item     The alternative with the most votes in the two rounds combined is implemented. Ties are broken by a fair toss coin.
	
\end{enumerate}

%According to the standard literature on costly voting, 
%In this paper, we provide a first analysis of the main feature of a new democratic voting procedure called „Assessment Voting“, introduced in\cite{original}.

Because AV serves the purpose of choosing one of two alternatives and is compatible with basic democratic principles---every citizen is granted one vote---, it could be used both in representative and direct democracies for any voting by the entire citizenry, such as the referendum on Brexit.\footnote{In Section~\ref{S:extensions}, we show that Assessment Voting works similarly when there are more than two alternatives.} To be implementable in democratic environments, however, the members of AG would have to be selected truly randomly and this group be large enough to make the results of this first round representative for the entire electorate from an ex post perspective---not merely ex ante. Moreover, both features should be common knowledge.\footnote{Given the {unequal} power that members of both voting groups have, distrust would emerge if citizens were to anticipate that either the identity of such groups may be manipulated or the selected group may not be representative enough (say, because the group is too small).} By building on a sequential voting procedure, and due to a few other features that we will discuss below, AV will be able to equalize the extent of the externalities that voting in favor of either alternative creates on its supporters.\footnote{A strand of literature on costly-voting has analyzed sequential procedures, {with the focus} on information aggregation \citep[see e.g.][]{battaglini2005sequential,optimal_voting}.} It is actually known from the literature on costly voting that externalities generated by casting a ballot typically yield a welfare-inefficient level of turnout. This phenomenon is at the core of the drawbacks of voluntary one-round voting procedures, and is resolved by AV, which induces an endogenous level of turnout that yields socially desirable outcomes.%, and the responsible  

%If random selection of the first group is guaranteed, every citizen has the same chance to vote first. The Assessment Group could be a small or intermediate fraction of the electorate, as this renders the first round less costly. 

To assess the properties of AV, we consider a model of a society that needs to choose one of two alternatives, say $A$ and $B$. Each citizen's preference is private information and is independently drawn from a given common distribution. We assume that ex ante, it is more likely that a citizen prefers $A$ to $B$ than $B$ to $A$. This means that $A$ is the desirable alternative from an ex-ante utilitarian perspective. For each citizen, there is also a cost $c>0$ of going to the ballot box. Such participation costs are private, but they may also be considered in the societal calculus: from {a} utilitarian perspective, right decisions should be met at the lowest possible average cost of participating in the voting procedure, given standard democratic constraints such as the right of every citizen to vote. Because we consider large societies, we assume that the number of citizens follows a Poisson distribution---and hence our political game is a Poisson game \citep{poisson,largepoisson}.\footnote{The costly-voting literature has shown that Poisson games characterize the limit {scenario where} the number of citizens goes to infinity \citep[see e.g.][]{taylor-yildirim-1}. Accordingly, considering a Poisson game does not drive our results, but it simplifies the analysis greatly. Recent papers that study Poisson models are \cite{campbell1999large}, \cite{hughes}, or \cite{polborn}. Also recently,  \cite{poissonGames} have analyzed the structure and number of Nash equilibria in Poisson games under different voting schemes.} 

The characterization of the equilibria of our sequential game is in general a complex task, even if we focus on the customary type-symmetric, totally-mixed strategy equilibria. The main reason for this complexity stems from the fact that for a two-round voting procedure such as AV, the strategies of second-round citizens need to take the outcome of the first round into account. In turn, the first-round voters face two sources of uncertainty: within-round uncertainty (how will the other members of AG vote, if at all), and across-round uncertainty (how will the second-round citizens vote in response to the outcome of the first round and to the predicted votes of all other members of the second-round group, if at all). 

In many cases, the above features yield a multiplicity of equilibria.\footnote{In some countries, unofficial voting polls are revealed before all polling stations are closed. This is the case in Spain, where the official turnout rate is also revealed in the course of election day, from which certain information about the development of the voting outcome can, in principle, be extracted. Our analysis reveals that, even if we leave aside the strategic incentives of choosing the moment for going to the ballot box, the purely positive analysis of sequential costly voting is a very difficult task.} Nevertheless, we shall prove that if in the first round voting is compulsory---or is incentivated through subsidies---and the size of the first group is sufficiently large, only one equilibrium of the subgame starting after the publication of the first-round vote count survives: no citizen will cast a vote in the second voting round. This implies that the outcome (i.e., the alternative chosen and the costs of voting incurred by all citizens) is fully determined in the first round. While this is admittedly a very strong prediction, it is reasonable to expect that the main mechanisms underlying this prediction will also operate in real-world environments, thereby giving considerable power of decision to members of the first voting group.\footnote{It is easy to verify that as long as the share of non-strategic voters---say those who always vote regardless of any other consideration---is not sufficiently large (relative to the size of AG), the outcome will still exhibit the same properties as in our baseline model, and hence the welfare conclusions will be very similar.} If, as already highlighted, the composition of this group is representative enough of the entire citizenry, socially optimal alternatives will be chosen at a low societal cost, without the need to deprive citizens of their right to vote: the low level of turnout in the second round will simply arise as the result of a cost-benefit analysis made by the citizens participating in this voting round, all of whom will be aware of the result in the first round.

%. Explaining multiplicity of equilibria. Relation with positive perspective: Canarian Islands, participation outcomes. (More complicated problem due to strategies when to go to the ballot box)

As a consequence, in the case of AV, the two components of welfare will also be determined entirely by the outcome of the first voting round. On the one hand, the alternative will be resolved by the (random) composition of such group, and hence the probability that the socially desirable alternative $A$ will be chosen goes to one as the size of such a group increases. What is more, the expected value of the distribution of the first-round vote count difference in favor of alternative $A$ (i.e. votes for $A$ minus votes for $B$) will also increase with the size of AV. This, in turn, will make it more complicated for any fixed group of B-supporters to change the final outcome in the second round, thereby reducing the individual incentives for each of them to go to the ballot box in the expectation that the negative result from the first round will be overcome. On the other hand, there will be no other costs associated with voting except the costs (or the subsidies) that are necessary to make all members of AG participate in the election process. It turns out that if the citizenry is large enough, it is possible to set the size of AV such that alternative $A$ is chosen with high probability and the voting costs remain moderately low. This follows from the fact that the vote count threshold that discourages participation in the second round voting does not change as the size of the entire electorate increases. %\footnote{\R Can we give `numerical bounds'?}

The remainder of the paper is organized as follows: In Section~\ref{S:model} the model is introduced. In Section~\ref{eq:analysis} we analyze the voting equilibria under AV. In Section~\ref{S:social_welfare} we explore if AV improves welfare compared to one-round voting, whether it is compulsory or voluntary. In Section~\ref{S:extensions} we analyze some extensions of our baseline model---see also Appendix~B. Section~\ref{S:conclusion} concludes. The proofs of the main body of the paper are in Appendix~A.

%% main text
\section{Model}
\label{S:model}

\subsection{Set-up}

We consider a country---or, more generally, a jurisdiction---whose citizens have a right to vote for one of two alternatives (or candidates), say $A$ and $B$. Citizens are indexed by $i$ or $j$.  There is a number $p$, with $1/2<p<1$, such that citizen $i$'s preferred alternative is $A$ with probability $p=:p_A$ and $B$ with probability $1-p=:p_B$. While individual preferences are private information, their prior distribution---i.e., the value of $p$---is common knowledge. If citizen $i$'s preferred alternative is chosen, he derives utility $1$, while he derives utility $0$ if the other alternative is chosen. This normalization is standard and does no affect the results. On occasion, we may also say that citizen $i$'s type is $t_i=A$ ($t_i=B$) when his preferred alternative is $A$ ($B$). Additionally, if $i$ exercises his right to vote, he incurs a cost $c$, which is additively subtracted from his utility. We consider that\footnote{If $c>1/2$, no citizen has incentives to vote at all. Assuming that $c$ is common to all voters with the same preferences is not a critical assumption, since we consider large populations, in which case the incentives to vote are very small for those citizens with cost higher than the lowest one  \citep[see e.g.][]{taylor-yildirim-2,taylor-yildirim-1}. Assuming that $c$ is common across types of citizens will allow us to focus on the differential effect of AV with respect to standard one-round voting procedures. Similar results would nonetheless obtain in the case where the two types of citizens {incurred} different costs of voting.}
\begin{equation}
\label{condition_costs}
0<c<1/2.
\end{equation}
We summarize the citizen utility profile in Table~\ref{table:utilities}.

%[english]

\begin{table}[!h]
\begin{center}
  \begin{tabular}{ c | c | c}
    \hline
    & \small{$i$'s preferred alternative is chosen} & \small{$i$'s preferred alternative is \textit{not} chosen}  \\ \hline
    \small{$i$ votes} & $1-c$ & $-c$ \\ \hline
    \small{$i$ does \textit{not} vote} & $1$ & $0$ \\
    \hline
  \end{tabular}
\end{center}
\caption{Voter Utilities.}
\label{table:utilities}
\end{table}

\subsection{A new two-round voting}

Under \textit{Assessment Voting} (AV), there are two voting rounds. In the first round, a small number of citizens are chosen by fair randomization to participate, all of whom constitute the so-called \textit{Assessment Group (AG)}. That is, each citizen has the same probability to be a member of AG. We let $N_1$, a positive integer, denote the size of  AG. All members of AG (simultaneously) decide whether to exercise their right to vote or not, and if so, which alternative to vote for. We assume that all members of AG are given a subsidy equal to $c$, their cost of voting, so that voting in the first round becomes costless for them. Whether members of the first round exercise their right to vote or not, they  cannot vote in the second round. In the latter round, only citizens who are \textit{not} members of AG have a right to vote. Before the voting in the second round takes place, the number of votes that each alternative received in the first voting round is disclosed and becomes common knowledge. Henceforth, we let $d$ denote the vote difference between alternatives $A$ and $B$ in the first round. In particular, if $d>0$, $A$ received $d$ more votes than $B$ from the members of AG. The alternative that receives more votes within the two voting rounds combined is implemented, with ties being broken by fair randomization.

We assume that the \textit{total} number of citizens is $N=N_1+N_2$, where $N_2$ follows a Poisson distribution with parameter $n_2$, {with} $n_2$ being a positive real number.\footnote{This assumption is made for convenience, but it does not affect our results qualitatively. Alternatively, we could assume that the total number of citizens $N$ follows a Poisson distribution with parameter $n$. In that case, for a fixed $N_1$, the probability that there are not enough citizens to make up for the $N_1$ members of AG converges to zero, as we {increase} $n$.} Then, we let $n=N_1+n_2$ denote the expected number of citizens. Following \cite{largepoisson}, the number of citizens of type $t$ in the second round, with $t\in \{A,B\}$, follows a Poisson distribution with parameter $n_2\cdot p_t$. The properties of the Poisson distribution ensure that from the perspective of a voter of type $t$, the number of voters of his same type also follows a Poisson distribution with parameter $n_2\cdot p_t$. This will simplify the analysis greatly. Finally, we denote by $\Omega_1$ and $\Omega_2$ the set of citizens of the first and second voting round, respectively.

%Suppose the first round takes place, and the vote difference between the alternatives $A$ and $B$ is $m$. Without loss of generality assume that $m\geq 0$. $m$ is a publicly available information for the remaining voters of the second round. From now on we consider the second round. Standard technical assumptions are the following, we assume that the number of voters in the second referendum round is distributed as a Poisson random variable with average $n$. One of the main advantages of using Poisson random variable for the number of voters is the fact that the number of voters of type $A$ is again a Poisson random variable, but with the average $n\cdot p_A$, while the number of voters of type $B$ is a Poisson random variable with the average $n\cdot p_B$. For the proof of this property see \cite{poisson}.

%Another advantage of using Poisson random variable for modelling is that once a voter learns about her type, from her perspective the number of the other voters of her (or the opposite) type is still distributed as a Poisson random variable with the same average. This property of the Poisson distribution is called memorylessness, see \cite{largepoisson}. We need it for the simplified analysis. 

\subsection{Equilibrium concept and information}

We study the existence and multiplicity of type-symmetric perfect Nash equilibria in our voting game. By \textit{type-symmetric} we mean that within each round, all citizens of the same type use the same strategy. Moreover, we assume that if they \textit{do} turn out, they vote \textit{sincerely}, i.e., we assume that they either vote in favor of their preferred alternative or abstain. In the second round, sincere voting arises endogenously as in one-round voting procedures already analyzed in the literature \citep[see e.g.][]{taylor-yildirim-1,polborn}. This follows from the fact that once the results of the first round become common knowledge, voting for an alternative that is not one's preferred is a weakly-dominated strategy for any citizen. As for the first round, although we impose sincere voting as an assumption of our model, it will turn out to be compatible with equilibrium behavior.\footnote{This assumption is further discussed {in Section \ref{subsec:first_round}}.} In combination with the subsidies given to members of AG, this means that the first-round outcome, namely $d$, follows mechanically from the size of AG and the value of $p$. The reason is that every member of AG will vote, and he will do it for the alternative he prefers. Accordingly, let citizen $i$ be a member of AG and consider the following random variable:
\begin{align}
\mathcal{X}_i=\begin{cases}
+1 & \mbox{ if }t_i=A ,\\
-1 & \mbox{ if }t_i=B.
\end{cases}
=
\begin{cases}
+1 & \mbox{ with probability } p_A,\\
-1 & \mbox{ with probability } p_B.
\end{cases}
\label{random_variable}
\end{align}
Then, $d$ is the outcome of the random variable $D$ defined by
\begin{align}
\label{def:sum_votes_first_round}
D := \sum_{i \in \Omega_1} \mathcal{X}_i.
\end{align}
As far as the citizens' strategic choices are concerned, we can thus focus on the subgame starting after the first voting round and after the value of $d$ has been made public, which we denote by $\mathcal{G}^2(d)$. For simplicity, we assume that the citizens who vote in the second round can only condition their vote on their type and the observed value of $d$, since nothing else is payoff-relevant. Accordingly, a strategy for citizen $i$ is a mapping
\begin{align*}
\alpha_i : \{A,B\}\times \{-N_1,\ldots,0,\ldots,N_1\} \rightarrow [0,1].
\end{align*}
That is, $\alpha_i(t, d)$ indicates the probability of citizen $i$ voting for his preferred alternative if he is of type $t$ and the vote difference between the two alternatives in the first round is $d$. As is standard, we assume that there are mappings
\begin{align*}
\alpha_A : \{-N_1,\ldots,0,\ldots,N_1\} \rightarrow [0,1]  \quad \mbox{ and } \quad \alpha_B : \{-N_1,\ldots,0,\ldots,N_1\} \rightarrow [0,1] 
\end{align*}
such that $\alpha_i(A,d)=\alpha_A(d)$ if $t_i=A$ and $\alpha_i(B,d)=\alpha_B(d)$ if $t_i=B$. That is, the probability that citizens of the same type will turn out are the same. A strategy profile is denoted by $\alpha=(\alpha_A,\alpha_B)$. Finally, we define $d_A=d$ and $d_B=-d$.

\section{Analysis of Assessment Voting}\label{eq:analysis}

We start by analyzing the second round of AV, which is described by $\mathcal{G}^2(d)$, and then focus on the analysis of the entire voting procedure.

\subsection{Second voting round}\label{subsec:second_round}

In the second voting round of AV, citizen $i$'s vote will make a difference in the final outcome only if the votes---together with the abstentions---of the remaining citizens that have a right to cast a vote in this round are such that:
\begin{itemize}
	\item in the second round, $i$'s preferred alternative obtains $d_{t_i}+1$ votes less than the other alternative, or 
	\item in the second round, $i$'s preferred alternative obtains $d_{t_i}$ votes less than the other alternative.
\end{itemize}
In the first case, $i$'s vote in favor of his preferred alternative $t_i$ will turn a defeat of $t_i$ into a tie, while in the second case, $i$'s vote in favor of $t_i$ will turn a tie into a win of $t_i$. In both cases, (expected) utility increases by $1/2$ if citizen $i$ turns out and votes in favor of his preferred alternative.
 
 %There is a reasonable agreement that in case of a draw, winner is determined by a fair coin tossing. This means that expected utility for each type is $\frac{1}{2}$. Both types in the first case improve their expected utility by $\frac{1}{2} - 0 = \frac{1}{2}$ and in the second case by $1-\frac{1}{2} = \frac{1}{2}$. Note that because of this we can assume $c\leq \frac{1}{2}$, otherwise no voter would ever vote under any circumstances. 

% Therefore, in the second round the voter of type $A$ makes a difference by voting if there are exactly $m$ or $m+1$ more voters of type $B$ voting in the second round. Similarly, the voter of type $B$ makes a difference by voting if there are exactly $m-1$ or $m$ more voters of type $B$ than of type $A$ voting in the second round. 

In the following, we investigate the totally-mixed equilibria of $\mathcal{G}^2(d)$, i.e., we assume that $0<\alpha_i(d)<1$ for $i\in \{A,B\}$. This type of equilibria is central in the costly-voting literature \citep[see e.g.][]{taylor-yildirim-1,polborn}. It will come in handy to use $x_A:=n_2 p_A \alpha_A$ and  $x_B:=n_2 p_B \alpha_B$ to denote the expected number of votes for each alternative given strategy profile $\alpha$. Note that it is equivalent to determine the pair $(\alpha_A(d),\alpha_B(d))$ and to determine the pair $(x_A,x_B)=(x_A(d),x_B(d))$. We now derive the conditions that make both type of citizens indifferent between abstaining and voting in favor of their preferred alternative, thereby incurring cost $c$. First, we assume that $d=0$. Then, we obtain the following two equations:
\begin{align}\label{type_A_m_zero}
	c & = \frac{1}{2}\sum_{k = 0}^{\infty} \frac{x_A^k}{e^{x_A}k!}\frac{x_B^{k}}{e^{x_B}k!} + \frac{1}{2} \sum_{k = 0}^{\infty} \frac{x_A^k}{e^{x_A}k!}\frac{x_B^{k+1}}{e^{x_B}(k+1)!},
 \\ \label{type_B_m_zero}
c & = \frac{1}{2}\sum_{k = 0}^{\infty} \frac{x_A^k}{e^{x_A}k!}\frac{x_B^{k}}{e^{x_B}k!} + \frac{1}{2} \sum_{k = 0}^{\infty} \frac{x_A^{k+1}}{e^{x_A}(k+1)!}\frac{x_B^{k}}{e^{x_B}k!}.
\end{align}
The first equation corresponds to the indifference condition for any voter $i$ of type $t_i=A$, while the second equation is the indifference condition for any voter $i$ of type $t_i=B$. Mathematically, the case where $d=0$ corresponds to the case where there is only one round of simultaneous, voluntary voting.\footnote{When voting is compulsory, the analysis is almost trivial: all citizens vote for their preferred alternative and they incur the cost of voting.} By simple algebraic manipulations, we obtain $x_A=x_B=x$, where
\begin{equation}
\label{eq:simultaneous_voting}
x \cdot  \sum_{k = 0}^{\infty} \frac{x^{2k}}{k!(k+1)!} =2 c e^{2x} - \frac{1}{2}\sum_{k = 0}^{\infty} \frac{x^{2k}}{k!k!}.
\end{equation}
The above equation has a unique solution in the unknown $x$ \citep[see][]{polborn}. Second, we assume that $d\geq 1$.\footnote{The case $d\leq -1$ {can be proven} analogously.} Then, we obtain the following system of equations:
\begin{align}\label{type_A}
c & = \frac{1}{2}\sum_{k = 0}^{\infty} \frac{x_A^k}{e^{x_A}k!}\frac{x_B^{k+d}}{e^{x_B}(k+d)!} + \frac{1}{2} \sum_{k = 0}^{\infty} \frac{x_A^k}{e^{x_A}k!}\frac{x_B^{k+d+1}}{e^{x_B}(k+d+1)!},
\\
\label{type_B}
c & = \frac{1}{2}\sum_{k = 0}^{\infty} \frac{x_A^k}{e^{x_A}k!}\frac{x_B^{k+d}}{e^{x_B}(k+d)!} + \frac{1}{2} \sum_{k = 0}^{\infty} \frac{x_A^k}{e^{x_A}k!}\frac{x_B^{k+d-1}}{e^{x_B}(k+d-1)!}.
\end{align}
The following result, which is {shown} in Appendix~A, {demonstrates} that the above system of equations is incompatible:
\begin{proposition}
	\label{main_result}
%For any cost $c$, with $0<c<1/2$, 
There exists $d^*(c)$ such that for all $d \geq d^*(c)$, the system of equations defined by (\ref{type_A}) and (\ref{type_B}) has no solution. Moreover, $d^*(c)$ (weakly) increases as $c$ decreases.
\end{proposition}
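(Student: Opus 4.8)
The plan is to recognize that each indifference condition equates $2c$ to a sum of two \emph{pivot probabilities}, and then to bound these probabilities uniformly in $(x_A,x_B)$. I would introduce, for every integer $m\ge 0$,
\[
S_m \;:=\; \sum_{k=0}^{\infty}\frac{x_A^{k}}{e^{x_A}k!}\,\frac{x_B^{\,k+m}}{e^{x_B}(k+m)!},
\]
so that (\ref{type_A}) reads $2c=S_d+S_{d+1}$ and (\ref{type_B}) reads $2c=S_d+S_{d-1}$. The crucial observation is probabilistic: if $X\sim\mathrm{Poisson}(x_A)$ and $Y\sim\mathrm{Poisson}(x_B)$ are independent, then $S_m=\Pr[Y-X=m]$ is exactly the probability that the second-round tally trails by $m$, i.e.\ the pivot event. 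Since any solution of the system must in particular satisfy (\ref{type_A}), it suffices to show that for large $d$ the single equation $2c=S_d+S_{d+1}$ admits no $(x_A,x_B)\in(0,\infty)^2$; notably, the two-equation structure is not even needed for non-existence.

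The heart of the argument is the uniform bound: there is an absolute constant $C$ with $S_m\le C/\sqrt m$ for all $m\ge 1$ and all $x_A,x_B>0$. First, $S_m=\sum_k\Pr[X=k]\Pr[Y=k+m]\le\max_j\Pr[Y=j]$, a bound independent of $x_A$. I would then split on the size of $x_B$. If $x_B\ge m$, the maximum of a $\mathrm{Poisson}(x_B)$ mass function is $O(1/\sqrt{x_B})=O(1/\sqrt m)$ by Stirling. If instead $x_B<m$, then $\Pr[Y=j]$ is decreasing for $j\ge m>x_B$, so $S_m\le\Pr[Y=m]=e^{-x_B}x_B^{m}/m!$, which is increasing in $x_B$ on $(0,m]$ and hence at most $e^{-m}m^{m}/m!\le 1/\sqrt{2\pi m}$, again by Stirling. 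Either way $S_m=O(1/\sqrt m)$, uniformly in $(x_A,x_B)$.

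Combining these, $S_d+S_{d+1}\le 2C/\sqrt d\to 0$. Hence, setting $d^*(c)$ to be the least integer exceeding $(C/c)^2$, for every $d\ge d^*(c)$ the right-hand side of (\ref{type_A}) is strictly below $2c$, so (\ref{type_A})---and a fortiori the system---has no solution. Finally, $(C/c)^2$ is decreasing in $c$, so the threshold $d^*(c)$ weakly increases as $c$ decreases, which gives the comparative-statics claim.

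I expect the main obstacle to be the uniformity of the pivot-probability bound rather than any single estimate. For each fixed $(x_A,x_B)$ it is immediate that $S_m\to 0$ as $m\to\infty$, but the proposition requires this quantity to fail to reach $2c$ simultaneously for all admissible $(x_A,x_B)$, including the delicate regime $x_B\approx m$ where the pivot probability is largest; recognizing that the worst case is of order $1/\sqrt m$---the maximal value of a Poisson mass function with mean near $m$---is precisely what makes the threshold $d^*(c)$ well defined and drives the whole result.
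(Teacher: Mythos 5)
Your proof is correct and follows essentially the same route as the paper's: both arguments reduce the problem to the single type-$A$ indifference condition and bound each pivot probability uniformly in $(x_A,x_B)$ by the maximal value $e^{-m}m^{m}/m!\le 1/\sqrt{2\pi m}$ of a Poisson mass function via Stirling, yielding the threshold $d^*(c)=\Theta(1/c^2)$. The only cosmetic difference is that you bound the whole convolution sum $S_m$ by the mode of a single Poisson$(x_B)$ law, whereas the paper bounds each term $y^{k+d}/(e^{y}(k+d)!)$ separately over $y$ before summing against the Poisson$(x_A)$ weights.
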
	

The negative result identified by Proposition~\ref{main_result} does not follow from the fact that the two equations of the system are incompatible, but from the fact that each of them cannot separately hold for values of $d$ that are large enough. To show this property more clearly, we assume now that $d\geq 2$ and focus on equilibria of the following type: citizens of type $A$ vote with probability zero, while citizens of type $B$ randomize. Note that because $d\geq 2$, an equilibrium where only citizens of type $A$ vote with positive probability cannot be an equilibrium, as alternative~$A$ will be chosen with certainty in the absence of any further votes.\footnote{If $n_2$ is large enough, it can be easily verified that there cannot be an equilibrium where at least one type of citizens votes with probability one.} Hence, we assume that $\alpha_A=0$ and $0<\alpha_B<1$, and obtain the following two conditions:
\begin{equation}\label{0_probability_A}
2c \geq \frac{x_B^d}{e^{x_B}d!} + \frac{x_B^{d+1}}{e^{x_B}(d+1)!}
\end{equation}
and
\begin{equation}
\label{0_probability_B}
2c = \frac{x_B^d}{e^{x_B}d!} + \frac{x_B^{d-1}}{e^{x_B}(d-1)!}.
\end{equation} 
The first equation guarantees that citizens $i$ of type $t_i=A$ are content with their decision not to vote, while the second equation is the indifference condition for any voter $i$ of type $t_i=B$. We can prove the following lemma:

\begin{lemma}
	\label{main_lemma}
	%For any cost $c$, with $0<c<1/2$, 
	There exists a positive integer $d^*(c)$ such that Eq. (\ref{0_probability_B}) does not have a solution for all $d \geq d^*(c)$. 
\end{lemma}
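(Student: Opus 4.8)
The plan is to treat the right-hand side of Eq.~(\ref{0_probability_B}) as a function of $x_B$ and to show that its global maximum over $x_B>0$ vanishes as $d\to\infty$. Since the left-hand side is the fixed positive constant $2c$, this immediately forces the equation to have no solution once $d$ is large enough, which is exactly the content of the lemma.

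First I would introduce, for each nonnegative integer $k$, the function $g_k(x):=x^k e^{-x}/k!$, which is just the Poisson$(x)$ mass at $k$. The right-hand side of Eq.~(\ref{0_probability_B}) is then precisely $f_d(x):=g_d(x)+g_{d-1}(x)$; recall that we are in the case $d\ge 2$, so both indices are positive and $f_d(0)=0$. Because the two summands are nonnegative, $\sup_{x>0}f_d(x)\le \max_{x>0}g_d(x)+\max_{x>0}g_{d-1}(x)$, which reduces the whole problem to bounding the peak of a single $g_k$.

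Second, I would differentiate: $g_k'(x)=x^{k-1}e^{-x}(k-x)/k!$, so $g_k$ rises on $(0,k)$, falls on $(k,\infty)$, and peaks at $x=k$ with value $g_k(k)=k^k e^{-k}/k!$. Invoking the Stirling lower bound $k!>\sqrt{2\pi k}\,(k/e)^k$ yields $g_k(k)<1/\sqrt{2\pi k}$. Combining this with the previous step gives $\sup_{x>0}f_d(x)<1/\sqrt{2\pi d}+1/\sqrt{2\pi(d-1)}$, and the right-hand side of this inequality decreases to $0$ as $d\to\infty$. Hence there is a least positive integer $d^*(c)$ for which the bound drops below $2c$, and for every $d\ge d^*(c)$ we have $\sup_{x>0}f_d(x)<2c$, so Eq.~(\ref{0_probability_B}) admits no solution in $x_B>0$. (The same monotone bound also shows $d^*(c)$ grows as $c$ shrinks, matching the dependence recorded in Proposition~\ref{main_result}.)

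I do not expect a serious obstacle: the one genuine idea is recognizing that each Poisson peak decays like $1/\sqrt{2\pi k}$, after which the conclusion is essentially forced. The only point requiring mild care is that bounding the maximum of the sum by the sum of the two maxima is crude, since the peaks of $g_d$ and $g_{d-1}$ occur at slightly different arguments; this is harmless for an \emph{existence} statement, because we need only that the bound tends to zero. Were one to want a sharp threshold, one would instead locate the true maximizer of $f_d$ at $x^*=\sqrt{d(d-1)}$ (from $f_d'(x)\propto e^{-x}x^{d-2}(d(d-1)-x^2)$) and estimate $f_d(x^*)$ directly, which is the more delicate route and is unnecessary here.
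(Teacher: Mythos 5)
Your argument is correct and is essentially the paper's own proof: both bound each term of the right-hand side by the peak of the Poisson mass $y^k e^{-y}/k!$ at $y=k$ and apply Stirling's inequality to get a bound of order $1/\sqrt{2\pi k}$ that eventually falls below $c$. The only cosmetic difference is that you obtain the ``for all $d\geq d^*(c)$'' part from the monotone decrease of the Stirling bound in $d$, whereas the paper propagates positivity upward in $d$ via the derivative identity $\partial f_{d+1}/\partial y=f_d$.
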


We point out that the threshold $d^*(c)$ of Lemma~\ref{main_lemma} is precisely the threshold used in Proposition~\ref{main_result}, and that we will use the same notation throughout the paper, including the Appendices. Furthermore, it is trivial to note that if $d\geq 2$, there is an equilibrium in which no citizen votes---we call it the \textit{no-show equilibrium}. The combination of Proposition~\ref{main_result} and Lemma~\ref{main_lemma} leads to the following result:

%This above result has the following implications. Fi
%Few observations before we start proving the main proposition. Note that if $m\geq 2$, there is an equilibrium where nobody votes, that is $\alpha_A = \alpha_B = 0$. If $m>0$ then the strategy profile where voters of type $A$ vote with positive probability and voters of type $B$ vote with $0$ probability can not be a Nash equilibrium.

%In order to prove the main proposition let us first concentrate on the equilibria solutions where type $A$ voters vote with probability $0$ and type $B$ players vote with probability $\alpha_B$. Observe that in this equilibrium we do not have condition (\ref{type_A}), but rather an inequality version of it. Note that if $x=0$, then the second condition (\ref{type_B}) should hold and it is rewritten as 

\begin{corollary}\label{corollary:non_existence_result}
%For any cost $c$, with $0<c<1/2$, 
If $d\geq d^*(c)$, the only equilibrium of $\mathcal{G}^2(d)$ is the no-show equilibrium. 
\end{corollary}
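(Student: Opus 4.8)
The plan is to view the right-hand side of Eq.~(\ref{0_probability_B}) as a function of the single nonnegative variable $x_B$ and to show that, for large $d$, this function never reaches the constant value $2c$. Writing
\[
f_d(x):=\frac{x^{d}}{e^{x}\,d!}+\frac{x^{d-1}}{e^{x}\,(d-1)!},\qquad x\ge 0,
\]
a solution of Eq.~(\ref{0_probability_B}) is precisely a point $x_B>0$ with $f_d(x_B)=2c$. Hence it suffices to prove that $\sup_{x\ge 0}f_d(x)<2c$ once $d$ is large enough: for such $d$ we then have $f_d(x)<2c$ for every candidate $x_B>0$, so no solution can exist. Note that establishing this supremum bound on all of $[0,\infty)$ is stronger than—and so implies—non-existence on the actual admissible range $x_B\in(0,n_2 p_B)$.

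Next I would estimate $\sup_{x\ge 0}f_d(x)$ by maximizing the two summands independently. Each summand has the form of a Poisson probability mass $e^{-x}x^{k}/k!$ (with $k=d$ and $k=d-1$, both at least $1$ since $d\ge 2$), and elementary differentiation shows that, as a function of $x$, this quantity is maximized at $x=k$, where it equals $e^{-k}k^{k}/k!$. Consequently,
\[
\sup_{x\ge 0}f_d(x)\ \le\ \frac{d^{d}}{e^{d}\,d!}+\frac{(d-1)^{d-1}}{e^{d-1}\,(d-1)!}.
\]
Invoking the Stirling lower bound $k!>\sqrt{2\pi k}\,(k/e)^{k}$ (valid for $k\ge 1$) on each factorial yields $e^{-k}k^{k}/k!<1/\sqrt{2\pi k}$, so that
\[
\sup_{x\ge 0}f_d(x)\ <\ \frac{1}{\sqrt{2\pi d}}+\frac{1}{\sqrt{2\pi (d-1)}}\ \xrightarrow[d\to\infty]{}\ 0.
\]

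Since $c>0$ is fixed, this upper bound eventually drops below $2c$. I would therefore define $d^*(c)$ to be the smallest integer $d\ge 2$ such that $1/\sqrt{2\pi d'}+1/\sqrt{2\pi(d'-1)}<2c$ holds for all $d'\ge d$; such an integer exists because the bound tends to $0$. For every $d\ge d^*(c)$ we then have $f_d(x)<2c$ for all $x\ge 0$, and in particular Eq.~(\ref{0_probability_B}) admits no solution $x_B>0$, which proves the lemma.

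I expect the only delicate point to be the passage from the two-term sum to the sum of the two individual maxima: these maxima are attained at the distinct arguments $x=d$ and $x=d-1$, so the resulting estimate is deliberately loose; what matters is only that it still decays to zero, which it does at rate $\Theta(1/\sqrt{d})$. The remaining ingredient, the Stirling lower bound, is standard. If a sharper statement were desired, one could instead maximize $f_d$ directly: its unique interior critical point solves $x^{2}=d(d-1)$, i.e.\ $x^{*}=\sqrt{d(d-1)}$, and a short computation gives the exact asymptotics $\sup_{x\ge 0}f_d(x)\sim\sqrt{2/(\pi d)}$. This refinement is not needed for the non-existence claim, but it confirms that the decay identified above is of the correct order.
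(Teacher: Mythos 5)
Your computation is correct as far as it goes, and the key estimate $\sup_{x\ge 0} e^{-x}x^{k}/k! < 1/\sqrt{2\pi k}$ (Stirling) is exactly the engine the paper uses. But what you have proved is Lemma~\ref{main_lemma}, not Corollary~\ref{corollary:non_existence_result}: your argument only rules out equilibria of the form $(0,x_B)$, whose existence is governed by the single equation (\ref{0_probability_B}) in one unknown. The corollary asserts that \emph{no} equilibrium other than the no-show equilibrium survives, and the main case you have not touched is the totally-mixed one in which \emph{both} types turn out with positive probability. There the relevant indifference conditions are the two-variable system (\ref{type_A})--(\ref{type_B}), in which the pivot probabilities are averaged over the Poisson-distributed number of opposing second-round votes; ruling that system out is the content of Proposition~\ref{main_result}, and the paper's proof of the corollary is precisely ``Proposition~\ref{main_result} combined with Lemma~\ref{main_lemma}.'' (One should also dispose of the remaining corners: profiles $(x_A,0)$ with $x_A>0$ fail because with $d\ge 2$ and no $B$-votes alternative $A$ wins regardless, so $A$-supporters would be paying $c$ for nothing; and profiles in which a type votes with probability one are excluded for large $n_2$.)

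The good news is that your uniform-in-$x$ bound closes the missing case in one extra step, which is how the paper argues. Since your estimate is monotone in the index, once $d\ge d^*(c)$ you have $\frac{y^{k+d}}{e^{y}(k+d)!}+\frac{y^{k+d+1}}{e^{y}(k+d+1)!}<2c$ for \emph{every} $k\ge 0$ and every $y\ge 0$. Hence the $A$-type indifference condition
\begin{equation*}
2c \;=\; \sum_{k=0}^{\infty}\frac{x_A^{k}}{e^{x_A}k!}\left(\frac{x_B^{k+d}}{e^{x_B}(k+d)!}+\frac{x_B^{k+d+1}}{e^{x_B}(k+d+1)!}\right)
\end{equation*}
cannot hold for any $x_A,x_B\ge 0$, because the Poisson weights $\frac{x_A^{k}}{e^{x_A}k!}$ sum to one while each bracketed term is strictly below $2c$; this kills the totally-mixed equilibria. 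You need to state this averaging step explicitly (and note the corner cases above); without it the corollary is not proved. As a minor point, your function $f_d$ clashes with the paper's $f_d(y)=ce^{y}-y^{d}/d!$, so rename one of them if you merge the arguments.
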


%Note that our main proposition \ref{main_result} is a generalization of this corollary, which we will prove now. 
 According to Corollary~\ref{corollary:non_existence_result}, if the absolute vote difference between the two alternatives  in the first voting round, namely $\vert d \vert$, is large enough,  there are no incentives for any second-round citizen to participate in the second voting round.\footnote{By symmetry, the case $d\leq -d^*(c)$ is analogous and hence the no-show equilibrium is the only equilibrium.} What is more, this property holds regardless of the (expected) size of the second-round voting group. An ensuing question is what the outcome from the second round is when $\vert d \vert $ is moderately low. We obtain the following result:

%Thus far, 
%Up to now we talked about non-existence of the equilibria, and the threshold $m(c)$ starting from which there is no equilibria different from $(0,0)$. In this section we will consider the cases when there are non-zero equilibria in the second round and give sufficient conditions for the existence. Consider the case where type $A$ players vote with probability $0$ and type $B$ voters vote with probability $\alpha_B$. We use the same notation as before. Therefore, we need to find $y$, such that (\ref{0_probability_A}) and (\ref{0_probability_B}) hold at the same time. Combining these two gives that $y^2<m(m+1)$.  

%\begin{lemma}
%	\label{first_root}
%	Given $d\geq 2$, an equilibrium $(0,x_B)$ of $\mathcal{G}^2(d)$ exists if and only if Equation (\ref{0_probability_B}) has a %solution.
%\end{lemma}

\begin{lemma}
	\label{first_root_bis}
Given $d> 2$, there is $c^*(d) \in (0,1/2)$ such that for all $c<c^*(d)$, an equilibrium $(0,x_B)$ of $\mathcal{G}^2(d)$ exists.
\end{lemma}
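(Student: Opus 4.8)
The plan is to treat the two defining conditions of the conjectured equilibrium, namely (\ref{0_probability_A}) and (\ref{0_probability_B}), as statements about Poisson probabilities and to exploit unimodality. Writing $p_k(x):=x^k e^{-x}/k!$ for the Poisson$(x)$ weight on $k$, the right-hand side of the indifference condition (\ref{0_probability_B}) is $h_B(x):=p_{d-1}(x)+p_d(x)$, while the right-hand side of the non-deviation condition (\ref{0_probability_A}) is $h_A(x):=p_d(x)+p_{d+1}(x)$. Since (\ref{0_probability_B}) forces $2c=h_B(x_B)$, condition (\ref{0_probability_A}) is equivalent to $h_B(x_B)\ge h_A(x_B)$. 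A direct computation gives $h_B(x)-h_A(x)=p_{d-1}(x)-p_{d+1}(x)=\frac{x^{d-1}e^{-x}}{(d+1)!}\bigl(d(d+1)-x^2\bigr)$, so (\ref{0_probability_A}) holds precisely when $x_B\le\sqrt{d(d+1)}$. The lemma therefore reduces to producing a positive root of $h_B(x)=2c$ lying below $\sqrt{d(d+1)}$.

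First I would pin down the shape of $h_B$. Using $p_k'(x)=p_{k-1}(x)-p_k(x)$, one telescopes $h_B'(x)=p_{d-2}(x)-p_d(x)=\frac{x^{d-2}e^{-x}}{d!}\bigl(d(d-1)-x^2\bigr)$, which is valid since $d>2$ makes every index nonnegative. Hence $h_B$ strictly increases on $(0,\bar x)$ and strictly decreases on $(\bar x,\infty)$, with a unique interior maximum at $\bar x:=\sqrt{d(d-1)}$; moreover $h_B(0)=0$ and $h_B(x)\to 0$ as $x\to\infty$. Writing $M:=h_B(\bar x)>0$, the intermediate value theorem yields, for every $c$ with $0<2c<M$, a root $x_B\in(0,\bar x)$ of $h_B(x_B)=2c$ on the increasing branch.

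The decisive observation is the ordering of the two critical abscissae: the peak of $h_B$ sits at $\bar x=\sqrt{d(d-1)}$, strictly below the threshold $\sqrt{d(d+1)}$ governing (\ref{0_probability_A}). Consequently the root just produced satisfies $x_B<\bar x=\sqrt{d(d-1)}<\sqrt{d(d+1)}$, so $h_B(x_B)-h_A(x_B)>0$ and (\ref{0_probability_A}) holds automatically. I would then set $c^*(d):=M/2=h_B\bigl(\sqrt{d(d-1)}\bigr)/2$; because $h_B=p_{d-1}+p_d$ is a sum of two disjoint Poisson weights it never reaches $1$ for $x>0$, so $M<1$ and indeed $c^*(d)\in(0,1/2)$. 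For every $c<c^*(d)$ we have $2c<M$, and the construction delivers a positive $x_B$ satisfying both (\ref{0_probability_A}) and (\ref{0_probability_B}). Translating back via $\alpha_B=x_B/(n_2 p_B)$ gives $\alpha_B>0$, and $\alpha_B<1$ for $n_2$ large (as assumed throughout), so $(0,x_B)$ is a genuine equilibrium of $\mathcal{G}^2(d)$.

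As for the main obstacle, none of the individual steps is deep, so the crux is really the qualitative comparison $\sqrt{d(d-1)}<\sqrt{d(d+1)}$ between the maximizer of the $B$-indifference curve and the feasibility threshold coming from the $A$-types' non-deviation constraint. The one place requiring care is to select the root on the increasing branch rather than the generally existing second root beyond the peak, where (\ref{0_probability_A}) would fail; insisting on $x_B\in(0,\bar x)$ settles this cleanly. A secondary point to verify is that $M<1$, so that $c^*(d)$ lands strictly inside $(0,1/2)$, which follows at once from the probabilistic reading of $h_B$.
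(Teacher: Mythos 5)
Your proof is correct, and while it shares the paper's overall skeleton---produce the smallest positive root of the $B$-indifference equation (\ref{0_probability_B}) via the intermediate value theorem, then check that the $A$-types' non-deviation constraint (\ref{0_probability_A}) holds automatically at that root---the way you verify the second step is genuinely different. The paper's verification rests on the derivative identity $\frac{\partial h_{d+1}}{\partial y}=h_d$ for $h_d(y)=2ce^y-\frac{y^d}{d!}-\frac{y^{d-1}}{(d-1)!}$: since $h_d\ge 0$ on $[0,x_B^*]$ when $x_B^*$ is the smallest root, $h_{d+1}$ is nondecreasing there, whence $h_{d+1}(x_B^*)\ge h_{d+1}(0)=2c>0$, which is (\ref{0_probability_A}). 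You instead note that once $2c$ equals the right-hand side of (\ref{0_probability_B}), condition (\ref{0_probability_A}) is exactly $\frac{x_B^{d-1}e^{-x_B}}{(d-1)!}\ge\frac{x_B^{d+1}e^{-x_B}}{(d+1)!}$, i.e.\ $x_B\le\sqrt{d(d+1)}$, and you place the root on the increasing branch of the unimodal right-hand side of (\ref{0_probability_B}), whose peak sits at $\sqrt{d(d-1)}<\sqrt{d(d+1)}$; your computations of the difference and of the derivative check out, and the restriction $d>2$ keeps all indices legitimate. This buys a more transparent reason why the smallest root works (an explicit comparison of critical abscissae rather than an integration/monotonicity argument) and the sharp constant $c^*(d)=\frac{1}{2}\left(\frac{y^{d-1}}{e^{y}(d-1)!}+\frac{y^{d}}{e^{y}d!}\right)\Big|_{y=\sqrt{d(d-1)}}$, whereas the paper evaluates the same expression at $y=d$, which lies past the maximizer and hence gives a slightly smaller threshold of the same order $\Theta(1/\sqrt{d})$. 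Your two cautionary remarks---that the second root beyond the peak could violate (\ref{0_probability_A}) so one must insist on the root below it (this is precisely the role of ``smallest positive root'' in the paper), and that the maximum is strictly below $1$ because it is the probability of a non-exhaustive Poisson event, so $c^*(d)\in(0,1/2)$---are both correct and well placed.
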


The above lemma complements the result of Corollary~\ref{corollary:non_existence_result}. While $d^*=d^*(c)$ determines the size of AG above which the no-show equilibrium is the only equilibrium, $c^*=c^*(d)$ determines the cost level below which equilibria that are different from the no-show equilibrium exist. It turns out---see the proof of Proposition \ref{main_result}---that for any given $c\in (0,1/2)$, there exist constants $K_1$ and $K_2$, with $K_2<K_1$, such that 
\begin{itemize}
	\item [\textit{(i)}] if $d>\frac{K_1}{c^2}$,  the only equilibrium of $\mathcal{G}^2(d)$ is the no-show equilibrium, and
	\item [\textit{(ii)}] if $d <\frac{K_2}{c^2}$, then $\mathcal{G}^2(d)$ has equilibria different from the no-show equilibrium.
\end{itemize}	
Hence, we can say that both thresholds are (approximately) tight, in the sense that $d^* \sim \frac{1}{(c^*)^2}$. In addition, it can be verified numerically that uniqueness of equilibria of $\mathcal{G}^2(d)$ is not guaranteed within all admissible parameter ranges, even if we only consider equilibria of the type $(0,x_B)$. For instance, multiplicity of equilibria occur if we consider $c=0.2$ and $d=3$. In this case, we have that $(0, y_1)$ and $(0,y_2)$ are equilibria of $\mathcal{G}^2(3)$, where $y_1\approx 3.17$ and $y_2\approx 3.76$ are positive solutions of the equation $0.4e^y=\frac{y^3}{6}+\frac{y^4}{24}$ that additionally satisfy the inequality $0.4e^y>\frac{y^4}{24}+\frac{y^5}{120}$.\footnote{The same holds true if we restrict to equilibria of the type $(x_A,x_B)$. Numerical examples for this other case can be provided upon request.} This example shows that if $\vert d \vert$ is moderately low, we cannot uniquely predict the outcome of the second round of AV, if at all. %\footnote{\R Would refinenments be possible? \B don't really understand the question}

%it is possible to check that for some ranges of $c$ and $m$, sometimes there are two equilibria solutions of type $(0,\alpha_B)$. We can explicitly list examples of $c$'s and $m$'s when this is the case. 

% \footnote{With regard to the totally mixed equilibria of type $(\alpha_A, \alpha_B)$, where both $\alpha_A$ and  $\alpha_B$ are strictly positive, we numerically checked that whenever $(0,\alpha_B)$ equilibria exist , $(\alpha_A, \alpha_B)$ equilibria also exist, though we are unable to prove it in the full generality.}

\subsection{First voting round}\label{subsec:first_round}

Corollary~\ref{corollary:non_existence_result} yields a very strong prediction: if $d$ is above a certain threshold, no citizen will vote in the second voting round. It turns out that by making $N_1$, the size of AG, large enough, the probability that $d$ is larger than this threshold converges to one. This is proved in the following result, which characterizes the outcome of Assessment Voting (almost surely). % The proof of the result can be found in Appendix~A.

%and $c$, with $0<c<1/2$

\bigskip

\begin{theorem}\label{thm:main_theorem}
For every $\varepsilon>0$, there is $N_1^*=N_1^*(\varepsilon,c,p_A-p_B)$ such that for all $N_1 \geq N_1^*$, the outcome of AV satisfies the following properties  with probability at least $1-\varepsilon$:
\begin{itemize}
	\item All citizens of the first voting round vote for their preferred alternative.
	
	\item No citizen of the second voting round votes.
	
	\item Alternative $A$ is chosen.
\end{itemize}
\end{theorem}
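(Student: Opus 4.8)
The plan is to combine the structural result of Corollary~\ref{corollary:non_existence_result} with a probabilistic estimate on the first-round vote difference $D$. By construction, every member of AG is subsidized and votes sincerely, so the first bullet holds deterministically; the entire content of the theorem is therefore to show that, with probability at least $1-\varepsilon$, the realized first-round difference $d$ exceeds the critical threshold $d^*(c)$. Once $d \geq d^*(c)$, Corollary~\ref{corollary:non_existence_result} guarantees that the unique equilibrium of $\mathcal{G}^2(d)$ is the no-show equilibrium (the second bullet), and since $A$ then leads by $d>0$ votes with no further votes cast, $A$ is implemented (the third bullet). So all three conclusions follow from the single event $\{D \geq d^*(c)\}$.

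The core estimate concerns $D = \sum_{i \in \Omega_1} \mathcal{X}_i$, a sum of $N_1$ i.i.d.\ $\pm 1$ variables with $\mathbb{E}[\mathcal{X}_i] = p_A - p_B > 0$. First I would write $\mathbb{E}[D] = N_1(p_A - p_B)$, which grows linearly in $N_1$, while the fluctuations of $D$ are only of order $\sqrt{N_1}$. Hence for $N_1$ large the event $\{D \geq d^*(c)\}$ becomes overwhelmingly likely, because $d^*(c)$ is a \emph{fixed} constant depending only on $c$ (by Lemma~\ref{main_lemma} and Proposition~\ref{main_result}) and does not scale with $N_1$. Concretely, I would bound
\begin{equation*}
\mathbb{P}\bigl(D < d^*(c)\bigr) \leq \mathbb{P}\bigl(D - N_1(p_A-p_B) < d^*(c) - N_1(p_A-p_B)\bigr),
\end{equation*}
and for $N_1$ large enough that $N_1(p_A-p_B) > d^*(c)$, the right-hand side is a lower-tail deviation of magnitude growing linearly in $N_1$. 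A Hoeffding (or Chernoff) bound for bounded i.i.d.\ variables then yields
\begin{equation*}
\mathbb{P}\bigl(D < d^*(c)\bigr) \leq \exp\!\left(-\frac{\bigl(N_1(p_A-p_B) - d^*(c)\bigr)^2}{2 N_1}\right),
\end{equation*}
which tends to $0$ as $N_1 \to \infty$.

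To finish, I would set $N_1^*(\varepsilon,c,p_A-p_B)$ to be the smallest integer for which the exponential bound above drops below $\varepsilon$; solving the inequality makes the dependence on all three parameters explicit, confirming the claimed functional form of the threshold. For every $N_1 \geq N_1^*$ we then have $\mathbb{P}(D \geq d^*(c)) \geq 1 - \varepsilon$, and on this event all three bullets hold simultaneously, completing the argument.

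The step I expect to be the main obstacle is not the concentration inequality itself, which is routine, but ensuring that the threshold $d^*(c)$ invoked here is genuinely independent of $N_1$ and of $n_2$. The subtlety is that $\mathcal{G}^2(d)$ depends on the second-round population through $n_2$, so a priori the critical $d^*$ could drift with the electorate size; the key is that Lemma~\ref{main_lemma} locates $d^*(c)$ via the indifference equation (\ref{0_probability_B}), whose failure to have a solution for large $d$ holds uniformly in $x_B$ and hence uniformly in $n_2$. I would therefore take care to cite the uniformity established in the proof of Proposition~\ref{main_result} (the bounds $d > K_1/c^2$) to guarantee that a single finite $d^*(c)$ works across all admissible second-round sizes, so that the linear-versus-square-root separation in the concentration bound is legitimate.
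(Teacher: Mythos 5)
Your proposal is correct and follows essentially the same route as the paper: the same reduction of all three bullets to the single event $\{D \geq d^*(c)\}$ via Corollary~\ref{corollary:non_existence_result}, and the same Hoeffding concentration bound on the i.i.d.\ $\pm 1$ sum $D$, with $N_1^*$ obtained by inverting that bound (the paper writes the resulting threshold explicitly and adds a closing check that sincere first-round voting is equilibrium-consistent, but this is imposed as a model assumption anyway). Your remark that $d^*(c)$ is uniform in $n_2$ because the non-existence in Lemma~\ref{main_lemma} and Proposition~\ref{main_result} holds for all $x_B \in \mathbb{R}_+$ is a correct and worthwhile observation that the paper leaves implicit.
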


According to the above theorem, if AG is large enough, citizens who have a right to vote in the second round are  all discouraged from going to the ballot box. The logic behind this result hinges on the law of large numbers: because \textit{(i)} alternative~$A$ is more preferred in the society than alternative $B$, \textit{(ii)} members of AG are selected randomly, and \textit{(iii)} voting is subsidized for members of AG, the difference in the first-round vote count for alternative~$A$ with respect to alternative~$B$ increase{s} with the size of $AG$, until the no-show equilibrium is reached (with high probability). {Moreover, this is} the only equilibrium in the second-round voting game (with this same high probability). The following corollary follows from Theorem \ref{thm:main_theorem}, and reveals how the size of AG should vary with respect to the most important parameters of the model:

\begin{corollary}\label{cor:main_theorem}
Let $N_1^*=N_1^*(\varepsilon,c,p_A-p_B)$ as defined in Theorem~\ref{thm:main_theorem}. Then,\nopagebreak
\begin{itemize}
	\item  $N_1^*$ increases if $\varepsilon$ decreases, with $\lim\limits_{\varepsilon\rightarrow 0} N_1^* = \infty$,
	\item  $N_1^*$ increases if $p_A-p_B$ decreases, with $\lim\limits_{p_A-p_B\rightarrow 0} N_1^* = \infty$,
	%\item  $N_1^*$ increases when $c$ is high and decreases marginally. 
	\item  $N_1^*$ increases if $c$ decreases, with $\lim\limits_{c\rightarrow 0} N_1^* = \infty$.% and $\lim\limits_{c\rightarrow 1/2} N_1^* >0 $.
	
	%Hence, there must be $0<c_1<c_2<\frac{1}{2}$ such that \textit{(i)} if $c<c_1$, a marginal increase in $c$ results in an increase of $N_1^*$; \textit{(i)} if $c_1<c<c_2$, a marginal increase in $c$ results in a decrease of $N_1^*$.
\end{itemize}
\end{corollary}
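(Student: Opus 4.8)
The plan is to reduce all three claims to monotonicity and limiting behaviour of a single scalar, the probability that the first-round margin clears the no-show threshold. By (\ref{random_variable})--(\ref{def:sum_votes_first_round}), $D=\sum_{i\in\Omega_1}\mathcal{X}_i$ is a sum of $N_1$ i.i.d.\ $\{-1,+1\}$ variables with $\mathbb{E}[\mathcal{X}_i]=p_A-p_B=:\mu>0$, so $\mathbb{E}[D]=N_1\mu$; and by Corollary~\ref{corollary:non_existence_result}, whenever the realised margin satisfies $d\ge d^*(c)$ the no-show equilibrium is the unique equilibrium and (since $d^*(c)\ge 1$) alternative $A$ wins, so the three listed properties hold. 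Writing $g(N_1):=\Pr\big(D\ge d^*(c)\big)$, the threshold of Theorem~\ref{thm:main_theorem} is the smallest integer $N_1$ such that $g(N_1')\ge 1-\varepsilon$ for all $N_1'\ge N_1$. I would therefore argue about the admissible set $\{N:\ g(N')\ge 1-\varepsilon\ \text{for all }N'\ge N\}$ and let the monotonicity of $g$ in each parameter — rather than of $N_1^*$ directly — do the work.

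For $\varepsilon$, monotonicity is definitional: raising the confidence level $1-\varepsilon$ can only shrink this admissible set and hence raise its minimum, so $N_1^*$ increases as $\varepsilon$ decreases. For the limit I would note that $g(N_1)<1$ for every finite $N_1$, since the event that all of AG are of type $B$ — probability $p_B^{N_1}>0$ — forces $D=-N_1<d^*(c)$. Hence once $\varepsilon<1-g(N_1)$ the size $N_1$ no longer qualifies; as this holds for each fixed $N_1$, $N_1^*\to\infty$ as $\varepsilon\to0$.

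For $\mu=p_A-p_B$, I would show $g$ is (weakly) increasing in $\mu$ by a coupling: for $\mu_1>\mu_2$, realise both vote sequences from one i.i.d.\ stream of uniforms so that each $\mathcal{X}_i$ is pointwise at least as large under $\mu_1$, yielding first-order stochastic dominance and thus $g_{\mu_1}\ge g_{\mu_2}$; hence $N_1^*$ increases as $\mu$ decreases. For the limit $\mu\to0$ (in the relevant range $\varepsilon<1/2$), I would use that for fixed $N_1$ the map $\mu\mapsto g(N_1)$ is a polynomial in $p_A=(1+\mu)/2$, hence continuous, while at $\mu=0$ symmetry of $D$ gives $g(N_1)<1/2\le 1-\varepsilon$. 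Thus each fixed $N_1$ fails on a neighbourhood of $\mu=0$, forcing $N_1^*\to\infty$.

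For $c$, the dependence enters only through $d^*(c)$. Proposition~\ref{main_result} states that $d^*(c)$ increases as $c$ decreases; since $\{D\ge d^*(c)\}$ shrinks as $d^*(c)$ grows, $g$ decreases and $N_1^*$ increases as $c$ decreases. For the limit I would invoke bound \textit{(ii)} after Lemma~\ref{first_root_bis}, $d^*(c)\gtrsim K_2/c^2\to\infty$; since $D\le N_1$ always, $g(N_1)=0$ once $d^*(c)>N_1$, so (indeed $N_1^*\ge d^*(c)$) $N_1^*\to\infty$. Each step is short; the substance lies in choosing the correct instrument for each monotonicity — a stochastic-dominance coupling for $\mu$, Proposition~\ref{main_result} for $c$, and a definitional comparison for $\varepsilon$ — and in phrasing everything through $g$ and the ``for all larger $N_1$'' clause. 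I expect the $\mu\to0$ limit, which rests on continuity together with the symmetric-walk estimate $g<1/2$, to be the most delicate point.
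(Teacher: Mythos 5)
Your argument is correct, but it proves the corollary for a slightly different object than the paper does, and by genuinely different means. The paper's $N_1^*$ is the explicit Hoeffding-based expression displayed in Eq.~(\ref{def:N_1_star_}), namely $N_1^*=\bigl\lceil \tfrac{d^*}{p_A-p_B}+\tfrac{\ln(2/\varepsilon)}{(p_A-p_B)^2}+\tfrac{\sqrt{2d^*(p_A-p_B)\ln(2/\varepsilon)+(\ln(2/\varepsilon))^2}}{(p_A-p_B)^2}\bigr\rceil$, and its proof of the corollary simply reads all three monotonicities and limits off that closed form, invoking only $d^*(c)=\Omega(1/c^2)$ from the proof of Lemma~\ref{main_lemma} for the $c$-dependence (incidentally, the paper's proof contains a slip, asserting that $d^*(c)$ ``decreases'' as $c$ decreases, whereas Proposition~\ref{main_result} states the opposite; your use of the correct direction is what the argument needs). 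You instead take $N_1^*$ to be the minimal admissible threshold, characterised through $g(N_1)=\Pr(D\ge d^*(c))$ and the ``for all larger $N_1$'' clause, and derive each bullet from the behaviour of $g$: a definitional comparison for $\varepsilon$, a coupling/stochastic-dominance argument for $p_A-p_B$, and monotonicity of $d^*(c)$ for $c$; the limits come from $g<1$ (the all-$B$ event), from continuity in $p_A$ plus the symmetric-walk bound $g\le 1/2$ at $p_A-p_B=0$, and from $D\le N_1$ against $d^*(c)\to\infty$. This is more work than the paper's one-liner, but it buys something real: it shows the qualitative conclusions are intrinsic to the tight threshold rather than artefacts of the particular concentration inequality chosen in the proof of Theorem~\ref{thm:main_theorem}, and it avoids committing to a formula. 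The only costs are the mild restriction $\varepsilon<1/2$ needed for your $p_A-p_B\to0$ limit (the paper's formula diverges for every $\varepsilon\in(0,1)$) and the fact that, read literally, the corollary refers to the quantity in Eq.~(\ref{def:N_1_star_}); since the theorem's statement never displays that formula, your reading is a legitimate one for a blind reconstruction.
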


%The proof of Corollary~\ref{cor:main_theorem} can be found in Appendix~A. 
The behavior of $N_1^*$ with respect to changes in $\varepsilon$ and $p_A-p_B$ is {self-}evident: when either the society is more divided (i.e., lower $p_A-p_B$) or we want to be more certain that the voting outcome will be dictated entirely by AG members (i.e., lower $\varepsilon$), the size of AG needs to be greater. Most remarkably, the above corollary implies that, ceteris paribus, a lower size of AG is obtained when $c$ increases. In particular, incentives for voting in the second round are least significant for $B$-supporters relative to those of $A$-supporters when $c$ is (almost) equal to $1/2$, in which case a smaller vote count difference from the first round suffices to discourage voting. In this case, $N_1^*$ may still be large, depending on the values of $p_A-p_B$ and $\varepsilon$. {Note that} $N_1^*$ must be generally large enough to satisfy two objectives: {on the one hand, second-round citizens' incentives} to vote should disappear; {on the other hand}, alternative $A$ should be chosen with at least probability $1-\varepsilon$. It is also important to stress that although $N_1^*$ gives a sufficient condition with regard to the size of AG for the outcome of AV to be described by Theorem~\ref{thm:main_theorem}, the discussion at the end of Section \ref{subsec:second_round} shows that this required size is (approximately) tight, in the sense that the desired outcome may fail to hold for lower AG size. Finally, Table~\ref{table:examples_numbers} depicts the value of $N_1^*$ for some parameter constellations.\footnote{We stress that the values depicted in Table~\ref{table:examples_numbers} do \textit{not} depend on the total number of citizens in the population. We stress that in actual implementations, $N_1^*$ should be chosen taking also into account that the outcome of the first-voting round should represent the actual population.}

 \begin{table}[!htb]

 	\begin{minipage}{.5\linewidth}
 		
 		 {
 		\centering
 		\small

 		\begin{tabular}{c | c | c |}
 		& $p_A-p_B =0.05$ & $p_A-p_B=0.15$  \\ \hline
 		$\varepsilon=0.1$ & 146,049 & 45,945 \\ \hline
 		$\varepsilon=0.01$ & 152,788 & 47,160 \\\hline
 	\end{tabular}\\ \vspace{0.1cm}	
  $c=0.005$ (with $d^*(c)=6,367$)
  
}
 	\end{minipage}%
 	\begin{minipage}{.5\linewidth}
 		
 		 {
 		\centering
 			\small
 				\begin{tabular}{c | c | c |}
 				& $p_A-p_B =0.05$ & $p_A-p_B=0.15$  \\ \hline
 				$\varepsilon=0.1$ & 41,856 & 12,433 \\ \hline
 				$\varepsilon=0.01$ & 45,769 & 13,097 \\\hline
 			\end{tabular}\\ \vspace{0.1cm}	
 		$c=0.01$ (with $d^*(c)=1,592$)
 		%\begin{tabular}{c | c | c |}
 		%& $p_A-p_B=0.05$ & $p_A-p_B=0.15$  \\ \hline
 		%$\varepsilon=0.1$ & 2,425 & 501 \\ \hline
 		%$\varepsilon=0.01$ & 4,273 & 501  \\\hline
 	%\end{tabular}\\ \vspace{0.1cm}	
 	  %$c=0.05$
 	}
 	\end{minipage} 
 
 \vspace{0.5cm}
 
  	\begin{minipage}{.5\linewidth}
  		
  		 {
 	\centering
 	\small
 	\begin{tabular}{c | c | c |}
 		& $p_A-p_B =0.05$ & $p_A-p_B=0.15$  \\ \hline
 		$\varepsilon=0.1$ & 3,003 & 455 \\ \hline
 		$\varepsilon=0.01$ & 4,858 & 668 \\\hline
 	\end{tabular}\\ \vspace{0.1cm}	
 	$c=0.1$ (with $d^*(c)=16$)
 	
 }

 \end{minipage}%
 \begin{minipage}{.5\linewidth}
 	
 	 {
 	 	
 	\centering
 	\small
 	\begin{tabular}{c | c | c |}
 		& $p_A-p_B=0.05$ & $p_A-p_B=0.15$  \\ \hline
 		$\varepsilon=0.1$ & 2,476 & 293 \\ \hline
 		$\varepsilon=0.01$ & 4,319 & 498  \\\hline
 	\end{tabular}\\ \vspace{0.1cm}	
 	$c=0.3$ (with $d^*(c)=2$)

}

 \end{minipage} 
  	\caption{The (optimal) size of the Assessment Group (AG).}
		 \label{table:examples_numbers}
 \end{table}

The numbers in Table \ref{table:examples_numbers} reflect the desirable size of AG for some particular situations. We stress that members in this group will exercise their right to vote and that this \textit{hard fact} is very different from participation in pre-election polls, in which case cheap-talk or other strategic behavior may lead to biased revelation of preferences \citep{goeree2007welfare,agranov2012makes}. As a matter of fact, the literature on costly voting predicts that manipulation of polls may have a strong impact on election outcomes, by triggering a level of turnout in equilibrium that does not match the true preferences of the electorate \citep{taylor-yildirim-1,taylor-yildirim-2}. Because AV is based on actual votes and not on reported opinions, such a voting procedure should be more immune to this type of manipulation. The reason is that as a result of the cost-benefit analysis of voting being equalized for all citizens, the outcome of AV matches the prior distribution of preferences in the entire citizenry (with high probability). %, and is thus immune to attempts of manipulations.

\section{Welfare Analysis} 
\label{S:social_welfare}

%An alternative way to overcome the problem of implementing each alternative with equal probabilities is a mandatory voting. Though, in this case the social costs are huge, everyone has to pay $c$ for voting. In this section we show that even with the small fraction in the first round, assessment voting guarantees that with high probability the correct outcome is implemented. 

%One possibility is that in the first round, the government subsidizes voters, that is pays the cost $c+\epsilon$ to each voter. Then for the rational voter it is always better to vote. Note that sincere voting is a pure strategy Nash equilibria with high probability. Another way of assuring necessary difference in the first round is to wait untill enough number of voters vote in the first round. 

Having characterized the (almost certain) equilibrium outcome under AV, a natural question is what would be the welfare consequences of introducing such a voting procedure. Focusing on expected average utilitarian welfare, there are two standard benchmarks, both of which consist in a single round of simultaneous voting: first, voting may be voluntary; second, voting may be compulsory.\footnote{The comparison between voluntary and compulsory one-round voting is {the topic of} \cite{borgers} and \cite{mandatory}.} In both cases, we assume that the total number of citizens follows a Poisson probability distribution of parameter $N_1+n_2$. Under one-round voluntary voting, the analysis in Section~\ref{S:model} shows that welfare amounts to%\footnote{\R Should we prove it?}
\begin{equation}
\label{eq:welfare_voluntary_one_round}
W^{vol}:=\frac{1}{2}-\frac{x}{N_1+n_2}\cdot c,
\end{equation}
where  $x$ is the solution to Equation (\ref{eq:simultaneous_voting}). Under one-round compulsory voting, it is easy to verify that welfare amounts to%\footnote{\R Should we prove it?}
\begin{equation}
\label{eq:welfare_compulsory_one_round}
W^{com}:=w_{d}^{com}- c,
\end{equation}
where $w_{d}^{com}$ is the expected average welfare obtained from the alternative eventually implemented when the entire population, which has expected size equal to $N_1+n_2$, votes sincerely. It is easy to verify that $w_d^{com} = p_A  \cdot (1-z^{com}_d(N))$, with $\lim_{N\rightarrow \infty} z^{com}_d(N)=0$. Moreover, according to Theorem~\ref{thm:main_theorem}, there is $N_1^{*}=N_1^{*}(\varepsilon,c,p_A-p_B)$ such that for all $N_1 \geq N_1^{*}$, with probability at least $1-\varepsilon$ we have 
\begin{equation}
\label{eq:welfare_assessment_voting}
W^{AV}:=p_A-\frac{N_1}{N}\cdot c,
\end{equation}
where $p_A$ coincides with the expected welfare obtained from the alternative being eventually implemented when all members of AG (which has a certain size $N_1$) vote sincerely. Finally, a lower bound for (expected) welfare is
\begin{equation}
\label{eq:welfare_lower_bound}
\underline{W}:=0- 1 \cdot c = -c.
\end{equation}
We will consider $\underline{W}$ to estimate welfare when our analysis does not yield clear-cut predictions as to the outcome of AV. Our main result regarding welfare is the following:
\begin{theorem}\label{thm:welfare}
%	For fixed $c$ and $N\rightarrow \infty$, assessment voting with subsidies gives a strictly better social welfare compared to a benchmark voting model. 
There are $N_1^{**}(c)$ and $n_2^{*}(c)$ such that if $N_1\geq N_1^{**}(c)$ and $n_2\geq n_2^*(c)$, we have
\begin{equation*}
W^{AV} > \max \{W^{vol},W^{com}\}.
\end{equation*}
\end{theorem}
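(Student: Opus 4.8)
The plan is to split the combined inequality $W^{AV} > \max\{W^{vol},W^{com}\}$ into the two separate comparisons $W^{AV}>W^{com}$ and $W^{AV}>W^{vol}$, and to treat them in increasing order of difficulty. Throughout I write $N=N_1+n_2$ for the expected electorate size. The first step is to make the expression (\ref{eq:welfare_assessment_voting}) for $W^{AV}$ the operative one: by Theorem~\ref{thm:main_theorem}, for a fixed small $\varepsilon$ and $N_1$ above the threshold $N_1^*(\varepsilon,c,p_A-p_B)$, with probability at least $1-\varepsilon$ alternative $A$ is chosen and only the $N_1$ subsidised members of AG incur a voting cost. The residual event of probability $\varepsilon$ on which this characterisation fails is absorbed through the lower bound $\underline{W}=-c$ of (\ref{eq:welfare_lower_bound}), so that by shrinking $\varepsilon$ the expected AV welfare stays arbitrarily close to (\ref{eq:welfare_assessment_voting}); hence $N_1^{**}(c)$ will be chosen to dominate this Theorem~\ref{thm:main_theorem} threshold.

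The comparison with compulsory voting is immediate and needs nothing beyond the AV characterisation. Subtracting (\ref{eq:welfare_compulsory_one_round}) from (\ref{eq:welfare_assessment_voting}) and using $w_d^{com}=p_A(1-z^{com}_d(N))$ gives
\[
W^{AV}-W^{com}=p_A\,z^{com}_d(N)+c\left(1-\tfrac{N_1}{N}\right)=p_A\,z^{com}_d(N)+c\,\tfrac{n_2}{N}.
\]
Since $z^{com}_d(N)\in[0,1]$ is an error probability the first summand is nonnegative, and the second is strictly positive because $n_2>0$; thus $W^{AV}>W^{com}$ holds unconditionally. The interpretation is transparent: AV and compulsory voting both select the majority alternative with high probability (outcome welfare close to $p_A$), but AV pays the voting cost only for the $N_1$ members of AG rather than for the whole electorate, saving the per-capita fraction $n_2/N$.

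The binding comparison is with voluntary one-round voting. Subtracting (\ref{eq:welfare_voluntary_one_round}) from (\ref{eq:welfare_assessment_voting}) yields
\[
W^{AV}-W^{vol}=\left(p_A-\tfrac12\right)-\frac{N_1-x}{N}\,c,
\]
where $x$ solves (\ref{eq:simultaneous_voting}) and therefore depends only on $c$, not on $N_1$ or $n_2$. The first term $p_A-\tfrac12>0$ is a fixed positive constant coming from $p_A>1/2$, whereas the subtracted term is bounded above by $\tfrac{N_1}{N}c=\tfrac{N_1}{N_1+n_2}c$, which tends to $0$ as $n_2/N_1\to\infty$. Hence the difference is positive as soon as $\tfrac{N_1-x}{N_1+n_2}\,c<p_A-\tfrac12$, i.e. as soon as $n_2$ is large enough relative to $N_1$ and $c$; this is what $n_2^*(c)$ encodes, and combined with $N_1\geq N_1^{**}(c)$ it delivers $W^{AV}>W^{vol}$.

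The main obstacle is the interaction between the two largeness requirements rather than either inequality in isolation. The AV characterisation forces $N_1$ to be large, while the voluntary comparison forces the per-capita AV cost $\tfrac{N_1}{N}c$ to be small, which in turn forces $n_2$ to be large relative to $N_1$; and the welfare gap one fights against, $p_A-\tfrac12=\tfrac12(p_A-p_B)$, is fixed and possibly small, so the required second-round population genuinely scales with $N_1$ and with $c/(p_A-p_B)$. The clean way to reconcile this is to pin $N_1$ at (or just above) its minimal admissible value from Theorem~\ref{thm:main_theorem} and then send $n_2$ up: the compulsory comparison is then free, the voluntary cost term vanishes, and both inequalities hold simultaneously, so the maximum is strictly dominated and the proof is complete.
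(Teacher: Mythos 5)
Your proposal follows essentially the same route as the paper: invoke Theorem~\ref{thm:main_theorem} to reduce $W^{AV}$ to the idealized expression $p_A-\tfrac{N_1}{N}c$ up to an $\varepsilon$-event absorbed by the lower bound $\underline{W}=-c$, then beat $W^{com}$ via the cost saving on the non-AG population and beat $W^{vol}$ via the fixed outcome gain $p_A-\tfrac12$ once $n_2$ is large enough relative to $N_1$ that the per-capita subsidy cost vanishes. The one imprecision is your claim that $W^{AV}>W^{com}$ holds ``unconditionally'': that subtraction uses the idealized formula, which is only valid with probability $1-\varepsilon$, so the residual terms of order $\varepsilon$ and $z_d(N_1,n_2)$ must still be dominated by the margin $c\,n_2/N$ --- the paper handles this by working with $f=\mathbb{E}\left[N_1/(N_1+N_2)\right]$ and an explicit correction $\delta(N_1,n_2)\to 0$, so that this comparison, too, requires $N_1$ and $n_2$ large rather than holding for free.
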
 

%We note that $N_1^{**}(c)$ is a new threshold, possibly different from $N_1^{*}(\varepsilon,c,p_A-p_B)$, yet larger than the latter. 
Naturally, at the constitutional level, neither $N_1^{**}(c)$ nor $n_2^{*}(c)$ could depend on the particular instances of referenda that would take place, which would be characterized by different parameters, particularly by different values of $p_A-p_B$. Theorem~\ref{thm:welfare} nonetheless indicates that in large societies, AV will perform better  on average than standard one-round voting, whether voting in the latter is voluntary or compulsory. In comparison with voluntary one-round voting, participation costs in AV will be of a similar extent, but decisions will represent the population preferences much more accurately. In comparison with compulsory one-round voting, decisions will represent the population preferences equally well, but participation costs will be much lower in AV. Hence, AV simultaneously exhibits the most desirable properties of voluntary and compulsory one-round voting, and it can thus be seen as an appropriate mixture of both approaches. It should also be emphasized that AV exhibits the main desirable features of {democratic mechanisms}, including the fact that every citizen has a right to vote. This property adds to the appeal of this new voting mechanism. What is more, alternatives that find little support in the citizenry are bound to be defeated in equilibrium {when AV is used}. In direct democracies, this fact may reduce the incentives to initiate popular voting on issues which are only supported by a small minority.

\section{Extensions}\label{S:extensions}

%Note that our analysis of proposition \ref{main_result} does not require costs of voting to be the same all over the voters, instead we can obtain the same result if we take $c=\textbf{min}_v(c(v))$, where $c(v)$ is the cost of voter $v$. Having different costs of voting is a reasonable assumption especially for different types of voters. One may argue that those in favor of the status quo are less motivated and therefore their costs to vote are bigger, while those in favor of change are more motivated to vote.   

The baseline model {analyzed thus far} can be extended in at least two sensible ways. First, one may ask whether the prediction that citizens of the second round will (almost) never vote if the size of AG is large enough hinges on our equilibrium concept, and very particularly on the assumption that citizens of the same type \textit{all} use the same strategy. Second, although we have introduced AV for binary decisions, one may wonder about whether the performance of this voting procedure (with respect to voluntary and compulsory one-round voting) is maintained when there exist three or more alternatives. It turns out that the answer to the first question is negative and the answer to the second question is positive. A comprehensive analysis of both extensions can be found in Appendix~B.  %, thereby adding to the robustness of our results. 

On the one hand, it is important to {investigate whether} the assumption that all citizens with the same preferences use the same strategy does drive our main (negative) result---namely that no equilibria differing from the no-show equilibrium exist if the vote count difference in the first round is large enough (in absolute terms){---or not. Demonstrating} that this result holds even if we consider different (sub)types for citizens who have the same preferences, with each subtype using a different strategy, greatly adds to the robustness of our prediction regarding the outcome of AV. The proof---see Appendix~B---is based on properties of the Poisson distribution and the multinomial theorem. On the other hand, a setting with three or more alternatives {allows} us to extend the application of AV from binary decisions (i.e., referenda) to other decisions, say elections for executive offices where several candidates compete. As mentioned in the Introduction, the case of multiple alternatives has been recently studied by \cite{polborn}. As in their paper, we show that sincere voting---i.e., voting in favor of the preferred alternative---is consistent with equilibrium behavior, although there may exist other equilibria. 

Finally, we note that the robust result that identifies the conditions for which the {no-show equilibrium is the} only equilibrium that will survive, provided that the vote count difference is large enough, has been derived within the framework of AV. Nevertheless, our analysis of game $\mathcal{G}^2(d)$ {could} be applied to one-round voluntary voting procedures where one of the two alternatives, say alternative $A$, is the status quo, and the other alternative, say alternative $B$, has to reach a qualified majority in order to be implemented.\footnote{A qualified majority imposes no direct requirement on the \textit{absolute} difference needed between the number of votes in favor of the alternative and the number of votes in favor of the status quo, but on the \textit{relative} difference. However, under the assumption that there will always be ``by default'' some share of partisan citizens voting for the status quo, it \text{does} also impose a requirement on this absolute difference. {This allows the applicability of our analysis to this second set-up.}} In that case, not only {would $d$ typically} be positive, but it {would} typically be very large if the electorate {were} large itself. Our (negative) result {would then imply} that in settings where voting is costly and the alternatives at hand only have a private value component, qualified majorities {might} effectively protect the status quo in general, regardless of the support such an alternative gathers within the population. We stress that under the standard (non-qualified) majority rule, the literature on costly voting predicts the opposite: the status quo and the alternative proposal will be implemented with the same probability, regardless of the support that either alternative gathers within the population.

\section{Conclusion}\label{S:conclusion}

Most democracies, representative or direct, have faced important challenges recently. Some of these challenges were due to inefficiencies of the decision mechanisms. In Switzerland, for instance, the 100,000-signature threshold for popular initiatives {is easy to attain}, paving the way for a misuse of popular votes as political mobilization devices. With more popular votes in the form of referenda, organization and opportunity costs represent an increasingly important factor to be taken into account.\footnote{For instance, information needs to be distributed to all citizens, with such costs being ultimately financed by taxes.  Moreover, campaign absorbs time from other governmental activities.} {Facing} more referenda also demands more effort from the citizens themselves, especially {in the case of} decisions about which citizens are ex ante poorly informed. Over and above these concerns, some of the referenda that took place in the EU in the last decade have also demonstrated that sometimes decisions end up being strongly dependent on {turnout}, a feature compatible with {the outcome volatility} predicted by the literature on costly voting. 

%are nowadays often used as political mobilization devices themselves.
%nother reason is that political mobilization has found its way in many new forms, mainly thanks to technological progress. 

%There are two reasons because they offer a channel between political parties and the electorate. Accordingly,  a change in the way demands for popular votes are handled can be advocated, to avoid the abuse of this cornerstone decision tool of (direct) democracies. 

In this paper, we have advocated a new voting procedure{, which we have called Assessment Voting,} that fulfills the most standard democratic requirements (e.g. one person, one vote) and may be a partial remedy to the problems just described. {The reason is that} Assessment Voting lowers {the} costs of popular votes and ensures (approximately) that the majority/minority relation in the citizenry is reflected in the voting outcome. Although our main analysis has focussed on binary decisions (i.e., referenda), we show in the Appendix that the main mechanisms are at work with a voting on three or more alternatives. {Because Assessment Voting} may yield more informed and less costly collective decisions, it could be tested in democracies on an experimental basis.

Our analysis could be extended in various. First, we could consider circumstances {where} citizens may have only partial knowledge about their own preferences. A sequential voting procedure such as Assessment Voting opens up the possibility for new forms of information transmission from voters of the first voting group to voters of the second voting group. Second, in anticipation of the use of Assessment Voting, proposal-making may change. For instance, proposals that have no chance under Assessment Voting (but {\it do} have one in single-round voting) may not be made anymore. These issues are left for future research.

%We have analyzed a new two-round voting procedure, which we called Assessment Voting. {\R TBC}

% can replace standard referendum. We showed that if the difference in the first round is large enough, in the costly voting setting, nobody votes in the second round. If on the other hand, this difference is not achieved, there are sometimes even multiple equilibria. 
%We also provided bounds on this threshold difference as a function of voting cost. The main advantage of the voting model is that its outcome is precise with high probability, unlike standard voting schemes, where the outcome is basically defined by a fair coin tossing. 

%\small

\bibliographystyle{apalike}
\bibliography{sample}

\pagebreak

\small

\appendix
\section*{Appendix A}

In this Appendix we prove Lemma~\ref{main_lemma}, Proposition~\ref{main_result}, Lemma~\ref{first_root_bis}, Theorem \ref{thm:main_theorem}, Theorem~\ref{thm:welfare}, and Proposition~\ref{three_alternatives}.

\begin{proof}[Proof of Lemma \ref{main_lemma}:]
	
	The goal of the proof is to show that there does not exist a non-negative solution in $y$ for the following equation if $d$ is sufficiently large:
	\begin{equation}
	\label{0_probability_B_proof}
	2c = \frac{y^d}{e^{y}d!} + \frac{y^{d-1}}{e^{y}(d-1)!}.
	\end{equation} 
	We start by noting that the right-hand side of Eq. (\ref{0_probability_B_proof}) is equal to $0$ for $y=0$ and tends to $0$ as $y$ tends to $\infty$. Therefore, proving that Eq. (\ref{0_probability_B_proof}) does not have a  non-negative  solution is equivalent to proving that for all $y\in \mathbb{R}_+$, the left-hand side of Eq. (\ref{0_probability_B_proof}) is strictly larger than the right-hand side.\footnote{We let $\mathbb{R}_+$ denote the set of non-negative real numbers.} To that end, we prove two auxiliary results. First, for a given $d\geq1$, we define
	\begin{align}
	\label{function:auxiliary}
	f_d(y) := ce^y - \frac{y^d}{d!}.% - \frac{y^{d-1}}{(d-1)!}
	\end{align}
	We claim that
	\begin{align}
	\label{claim:auxiliary}
	%2c > \frac{y^m}{e^{y}m!} + \frac{y^{m-1}}{e^{y}(m-1)!} \mbox{ for all }y\in\mathbb{R}_+ \Rightarrow 	2c > \frac{y^m}{e^{y}m!} + \frac{y^{m-1}}{e^{y}(m-1)!} \mbox{ for all }y\in\mathbb{R}_+ \mbox{ and all }m'>m.
	f_d(y)>0 \mbox{ for all }y\in\mathbb{R}_+ \Rightarrow 	f_{d+1}(y)>0 \mbox{ for all }y\in\mathbb{R}_+.
	\end{align}
	For the proof of the claim, assume that the left-hand side of (\ref{claim:auxiliary}) is true. Then, %for $d\geq 2$,
	\begin{equation}
	\label{eq:within_proof}
	\frac{\partial f_{d+1}(y)}{\partial y} = f_{d}(y)>0.
	\end{equation}
	That is, $f_{d+1}(y)$ is increasing in $y\in \mathbb{R}_+$. Since $f_{d+1}(0)=c>0$, it follows immediately that the claim in (\ref{claim:auxiliary}) is correct. 	Second, for a given $d\geq2$, define
	\begin{align}
	\label{function:auxiliary_bis}
	g_d(y) := \frac{f_d(y)}{e^y}=c - \frac{y^d}{e^y d!} 
	\end{align}
	and note that
	\begin{align}
	\label{function:auxiliary_bis_bis}
	g_d(y)>0 \Leftrightarrow f_d(y)>0. 
	\end{align}
	Consider now the following claim, which we will also prove:
		\begin{align}
	\label{claim:auxiliary_bis}
	g_d^*(y)>0 \mbox{ for all }y\in\mathbb{R}_+ \mbox{ for some } d^*:=d^*(c)\geq 1.
	\end{align}
	By straightforward calculations, 
	\begin{equation*}
		\frac{\partial g_{d}(y)}{\partial y} = -\frac{y^{d-1}(d-y)}{e^y d! }.
	\end{equation*}
	It then follows that $y^*=d$ is the (global) minimum of $g_{d} (y)$ in $\mathbb{R}_+$, since
		\begin{equation*}
	\frac{\partial g_{d}(y)}{\partial y} \vert_{y=d} = 0
	\end{equation*}
	and $\frac{\partial g_{d}(y)}{\partial y}$ is  negative for all $y<d$ and positive for all $y>d$. 
	We accordingly obtain that, for all $y\in \mathbb{R}_+$,
	\begin{align*}
%	\label{claim:auxiliary_bis}
	g_d (y) \geq g_d (d) = c- \frac{d^d}{e^d d!} \geq c- \frac{1}{\sqrt{2 \pi d} e^{\frac{1}{12d}}},
	\end{align*}
	where the last inequality holds by Stirling's inequality. Hence, a sufficient condition for the claim of (\ref{claim:auxiliary_bis}) to hold is that
	\begin{align*}
	c  > \frac{1}{\sqrt{2 \pi d} e^{\frac{1}{12d}}}.
	\end{align*}
	It is straightforward to verify that the righ-hand side of the above inequality is a decreasing function of $d$, provided that $d\geq 1$, and that, moreover, one that converges to zero as $d$ goes to infinity. Accordingly, we let $d^*(c)$ be (uniquely) defined as the smallest positive integer larger than one that satisfies
	\begin{align}
	\label{claim:auxaux}
	c  > \frac{1}{\sqrt{2 \pi d^*(c)} e^{\frac{1}{12d^*(c)}}}.
	\end{align}
	Note that, in particular,
	 \begin{equation}
	 \label{eq:convergence}
	 d^*(c)={\Omega\left(\frac{1}{c^2}\right)}.
	 \end{equation}
	 All in all, we have demonstrated the claim of Eq. (\ref{claim:auxiliary_bis}). 	Finally, let $d\geq d^*(c)$. Then, for all $y\in \mathbb{R}_+$,
	\begin{align*}
	2c-\left(\frac{y^d}{e^{y}d!} + \frac{y^{d-1}}{e^{y}(d-1)!}\right) = g_{d}(y)+g_{d-1}(y)>0,
	\end{align*}
	where the strict inequality holds by the Claims of (\ref{claim:auxiliary}) and (\ref{claim:auxiliary_bis}). This completes the proof of the lemma.

\end{proof}
	
%	Finally, We prove the above claim by induction on $d$.  

	%Then note that the derivative of the function on the left hand side evaluated in $m=l+1$ is exactly equal to the same function evaluated in $f= l$. Shortly, $\frac{d}{dy}f_{l+1}(y) = f_l(y)$. Therefore, $f_m(y)$ is an increasing function in $y$ for $d>d(c)$. Since for $y=0$, the value of $f_m(y)$ is strictly positive, it stays strictly positive for any $y>0$.   
	
	%We split the right hand side of Eq. (\ref{0_probability_B_proof}) in two parts, and show that there exist $m_1(c)$ and $m_2(c)$ so that for $d=d_1(c)$, $d=d_2(c)$, all $y$ the following two inequalities hold: $ce^y>\frac{y^d}{d!}$ and $ce^y>\frac{y^{d-1}}{(d-1)!}$. If we manage to prove this, then the claim follows by taking $d(c) = max(m_1(c), m_2(c))$ and summing up these two inequalities. Because of symmetry, it is enough to prove only the first inequality, which we rewrite as $c>\frac{y^d}{e^y d!}$. Fix $d$. The right hand side of this inequality is maximized when its derivative is equal to $0$, that is 
	
	%\begin{align*}
	%\frac{my^{d-1}e^ym!-e^ym!y^d}{(e^ym!)^2} = 0,
	%\end{align*}
	
	%which implies that $y=d$. By evaluating the right hand side of the inequality for $y=d$ we get that $c>\frac{d^d}{e^d d!}$. Here we use Stirling's inequality, $$d! \leq \sqrt{2\pi} d^{d+\frac{1}{2}}e^{-d}e^\frac{1}{12m}$$ and replace $d!$ with the right hand side. We have to prove that for large enough $d=d(c)$, the following holds $c>\frac{1}{\sqrt{2\pi}d^{\frac{1}{2}}e^{\frac{1}{12m}}}$. This is obviously true for some $d(c) = \Theta(\frac{1}{c^2})$. 

%\newpage
\bigskip

\begin{proof} [Proof of Proposition \ref{main_result}:]
	
	The goal of the proof is to show that the following system of equations in $(x,y)$ does not have a solution with non-negative components if $d$ is sufficiently large:
	\begin{align}\label{type_A_proof}
	2c & =\sum_{k = 0}^{\infty} \frac{x^k}{e^{x}k!} \left(\frac{y^{k+d}}{e^{y}(k+d)!} +  \frac{y^{k+d+1}}{e^{y}(k+d+1)!}\right),
	\\
	\label{type_B_proof}
	2c & = \sum_{k = 0}^{\infty} \frac{x^k}{e^{x}k!}\left( \frac{y^{k+d}}{e^{y}(k+d)!} +  \frac{y^{k+d-1}}{e^{y}(k+d-1)!} \right).
	\end{align}
	The system of equations is obtained from~(\ref{type_A}) and~(\ref{type_B}) by some algebraic manipulations and by setting $x_A=x$ and $x_B=y$. From the proof of Lemma \ref{main_lemma}, there is a positive integer $d^*=d^*(c)$ such that, for all $d\geq d^*-1$, $k\geq 0$ and $y\in \mathbb{R}_+$,
	\begin{equation}
	\label{eq:aux_1}
	\frac{y^{k+d}}{e^{y}(k+d)!} + \frac{y^{k+d+1}}{e^{y}(k+d+1)!}<2c.
	\end{equation}
	Moreover, it is know from the properties of the Poisson probability distribution that
	\begin{equation}
	\label{eq:aux_2}
	\sum_{k = 0}^{\infty}\frac{x^k}{e^xk!} = 1.
	\end{equation}
	Accordingly,
	\begin{align*}
	\sum_{k = 0}^{\infty} \frac{x^k}{e^{x}k!} \left(\frac{y^{k+d}}{e^{y}(k+d)!} +  \frac{y^{k+d+1}}{e^{y}(k+d+1)!}\right) <
	\sum_{k = 0}^{\infty} \frac{x^k}{e^{x}k!} 2c = 2c ,%< 1,
	\end{align*}
	where the first inequality is due to (\ref{eq:aux_1}) and the second inequality is due to (\ref{eq:aux_2}). This completes the proof of the proposition, since (\ref{type_A_proof}) cannot be satisfied for any $(x,y)$ with $x,y\in \mathbb{R_+}$.
	
\end{proof}

%\begin{proof} [Proof of Lemma \ref{first_root}:]
	%Suppose that Equation (\ref{0_probability_B}) does not have a solution. Since (\ref{0_probability_B}) is the necessary condition for the existence of the $(0, \alpha_B)$ type equilibria, we conclude there is no such equilibria. Next 
	
%\end{proof}

\bigskip

\begin{proof}[Proof of Lemma \ref{first_root_bis}:]

	Throughout the proof, we have $d>2$ fixed. First, we show that an equilibrium $(0,x_B)$ of $\mathcal{G}^2(d)$ exists if and only if Eq. (\ref{0_probability_B}) has a solution. It suffices to prove sufficiency, i.e., if Eq. (\ref{0_probability_B}) holds for a given $(0,x_B)$, this must be an equilibrium of $\mathcal{G}^2(d)$. Indeed, take the smallest positive root of Eq. (\ref{0_probability_B}), which we denote by $x_B^*$. Additionally, consider
	\begin{align*}
	%\label{function:auxiliary_d}
	h_{d+1}(y) = 2ce^y - \frac{y^{d+1}}{(d+1)!} - \frac{y^d}{d!} \quad \mbox{ and }	\quad h_{d}(y) = 2ce^y - \frac{y^{d}}{d!} - \frac{y^{d-1}}{(d-1)!}.
	\end{align*}
	That is, $x_B^*$ is the smallest positive solution $y$ of the equation $	h_{d}(y) =0$. In particular, it must be that
		\begin{equation*}
	%\label{0_probability_B_proof_lemma_1}
	h_d(x_B^*)=0
	\end{equation*}
	and, by continuity of $h_{d}$ and the fact that $h_{d}(0)=2c>0$,
	\begin{equation}
	\label{0_probability_B_proof_lemma_2}
h_d(y)\geq 0 \mbox{ for all }y\leq x_B^*.
	\end{equation}
	Next, note that from Eq. (\ref{eq:within_proof}) in Lemma \ref{main_lemma}, it follows that
	\begin{equation}
		\label{0_probability_B_proof_lemma_3}
	\frac{\partial h_{d+1}(y)}{\partial y} =\frac{\partial}{\partial y} \left(f_{d+1}(y)+f_{d}(y)\right) = f_{d}(y)+f_{d-1}(y)=h_d(y),
	\end{equation}
	where $f_{d-1},f_d,f_{d+1}$ were defined in (\ref{function:auxiliary}). Hence, Eqs. (\ref{0_probability_B_proof_lemma_2}) and (\ref{0_probability_B_proof_lemma_3}) imply that
	\begin{equation}
	\label{ineq_lemma}
	\frac{\partial h_{d+1}(y)}{\partial y} \geq 0 \mbox{ if }y\leq x_B^*.
	\end{equation}
	Then, (\ref{ineq_lemma}) implies that 
	\begin{equation*}
	2c - \frac{(x_B^*)^d}{e^{x_B^*}d!} - \frac{(x_B^*)^{d+1}}{e^{x_B^*}(d+1)!}=h_{d+1}(x_B^*) \geq h_{d+1}(0)=2c >0.
	\end{equation*}
	As a consequence, Ineq. (\ref{0_probability_A}) is (strictly) satisfied for $x_B^*$, and hence $(0,x_B^*)$ is an equilibrium of~ $\mathcal{G}^2(d)$.

	Second, we show that there is $c^*(d)>0$ such that an equilibrium of~$\mathcal{G}^2(d)$ of the type $(0,x_B)$ exists for all $c\leq c^*(d)$. By the first part of the proof, it is sufficient to prove that such $c^*(d)$ exists guaranteeing that there is $x_B$ such that $h_d(x_B)=0$, provided that $c\leq c^*(d)$. Indeed, let $c^*:=c^*(d)$ be defined as follows:
	\begin{equation*}
	2c^*=\frac{d^{d-1}}{e^d(d-1)!}+\frac{d^d}{e^dd!}.
	\end{equation*}
	Then, for $0<c\leq c^*$,
	\begin{equation*}
	h_d(0)=2c>0
	\end{equation*}
	and
	\begin{equation*}
	h_d(d)=2ce^d-\frac{d^d}{d!}-\frac{d^{d-1}}{(d-1)!}\leq 2c^*e^d-\frac{d^{d-1}}{(d-1)!}-\frac{d^d}{d!}=0.
	\end{equation*}
	Hence, due to continuity of $h_d$, the equation $h_d(x_B)=0$ must have a solution. {This proves the result of the lemma.}
	
	We conclude the proof with a remark.  If we apply Stirling's formula to $c^*(d)=\frac{d^d}{e^d d!}$, we obtain 
	\begin{equation}
	\label{eq:cond_aux_aux}
	c^*(d)=O\left(\frac{1}{\sqrt{d}}\right).
	\end{equation}		 
	 In combination with~(\ref{eq:convergence}), Condition~(\ref{eq:cond_aux_aux}) implies that $d\geq d^*(c)$, with $d^*(c)=\Omega\left(\frac{1}{c^2}\right)$, is not only sufficient the result in Lemma~\ref{main_lemma} to hold, but it is \textit{also} necessary. In other words, the difference $d^*(c)$ in the vote count $d$ obtained after the first voting round \emph{must} be achieved in this round in order for the no-show equilibrium to be the only equilibrium of $\mathcal{G}^2(d)$, the game representing the second-voting round when the vote count difference is $d$. More specifically, for any given $c\in (0,1/2)$, there exist constants $K_1$ and $K_2$, with $K_2<K_1$, such that the following two statements hold. First, if $d>\frac{K_1}{c^2}$,  the no-show equilibrium  is the only equilibrium of $\mathcal{G}^2(d)$. Second, if $d <\frac{K_2}{c^2}$, then $\mathcal{G}^2(d)$ has equilibria that are different from the no-show equilibrium. The existence of $K_1$ and $K_2$ follows from Conditions~(\ref{eq:convergence})~and~(\ref{eq:cond_aux_aux}).	
	 
\end{proof}
	
%	Indeed, consider $x_B = 1$. Then, the left-hand side of Equation (\ref{0_probability_B}) is equal to $2ce$, while the right-hand side is equal to $\frac{1}{m!}+\frac{1}{(m+1)!}$. Take now $c>0$ such that $c < (\frac{1}{m!}+\frac{1}{(m+1)!})/(2e)$. Then, Equation (\ref{0_probability_B}) has a solution. 

	%{\R [THIS NEEDS TO BE WRITTEN MUCH BETTER] In fact, upper bound on $c$ is much larger. Asymptotically speaking it is of size $\Theta(\frac{1}{\sqrt{m}})$, but the formal proof of it rather involved and is not presented here.\footnote{\R Why not? \B I think it's a negative result and not worth to spend much time on it} The main observation is that  for each $m>2$, there is a threshold $c*(m)>0$, such that for all $c < c*(m)$ equilibria different from no-show equilibrium exist, and for all $c>c*(m)$ the only symmetric equilibrium is no-show equilibrium.} 

\bigskip

\begin{proof}[Proof of Theorem \ref{thm:main_theorem}:]

%\begin{align}
%\mathcal{X}_i=\begin{cases}
%+1 & \mbox{ if }t_i=A ,\\
%-1 & \mbox{ if }t_i=B.
%\end{cases}
%=
%\begin{cases}
%+1 & \mbox{ with probability } p_A,\\
%-1 & \mbox{ with probability } p_B.
%\end{cases}
%\label{random_variable}
%\end{align}
%Then, $d$ is the outcome of random variable $D$ defined by
%\begin{align}
%\label{def:sum_votes_first_round}
%D := \sum_{i \in \Omega_1} \mathcal{X}_i.
%\end{align}

		%The proof of this theorem relies on the use of Hoeffding's inequality for the estimation of the number $N_1^*:=N_1^*(\varepsilon,c)$ for which the three claims of the theorem hold with probability $1-\varepsilon$. 
		As already mentioned in the main body of the paper, we assume that all citizens of AG vote sincerely, i.e., that they vote for their preferred alternative. Below, we show that this assumption is also consistent with equilibrium behavior. Accordingly, the behavior of any such citizen $i$ is described by the random variable $\mathcal{X}_i$---see (\ref{random_variable})---, while the difference in vote count for alternative~$A$ with respect to alternative~$B$ obtained in the first voting round is described by the random variable 
\begin{align*}
D = \sum_{i \in \Omega_1} \mathcal{X}_i,
\end{align*}		
which has been defined in (\ref{def:sum_votes_first_round}), $\Omega_1$ denoting the set of citizens that belong to AG. Because $\mathbb{E}[\mathcal{X}_i] = p_A - p_B$ and $X_i$ are i.i.d., it follows that
\begin{equation*}
E[D] = N_1 \cdot \mathbb{E} [\mathcal{X}_i] = N_1 \cdot (p_A-p_B).
\end{equation*}		
Recall that $d^*=d^*(c)$ has been defined in Proposition \ref{main_result}. This integer guarantees that if $d$, the outcome associated with the random variable $D$, is at least $d^*$, the only equilibrium of game $\mathcal{G}_2^*(d)$ is the no-show equilibrium. In that case, the only votes are cast in the first round, and because $d>0$, alternative~$A$ will be chosen. Now, let
\begin{align}
\label{def:N_1_star_}
N_1^* &= N_1^*(c,\varepsilon,p_A-p_B) : = \left\lceil \frac{d^*}{p_A-p_B}{+ \frac{\ln\frac{2}{\varepsilon}}{(p_A-p_B)^2}} + \frac{\sqrt{2d^*(p_A-p_B)\ln \frac{2}{\varepsilon} + (\ln\frac{2}{\varepsilon})^2 }}{(p_A-p_B)^2} \right\rceil.
%d^*(c)\geq \frac{4\ln \frac{1}{\varepsilon}}{p_A-p_B}.
\end{align}
%We note that $N_1^*(c,\varepsilon, p_A-p_B)$ is well defined if $\varepsilon$ is small enough. 
Henceforth, we assume that
\begin{equation}
\label{def:N_1_assumption}
N_1 \geq N_1^*(c,\varepsilon, p_A - p_B).
%d^*(c)\geq \frac{4\ln \frac{1}{\varepsilon}}{p_A-p_B}.
\end{equation}	
Then, we obtain that
\begin{equation}
\label{eq:aux_negative}
d^*-E[D] = d^*- N_1 \cdot (p_A-p_B) \leq d^*- N_1^* \cdot (p_A-p_B) <0,
\end{equation}	
where the first inequality follows from Ineq. (\ref{def:N_1_assumption})  and the second inequality follows from the fact that $N_1^* \geq \frac{d^*}{p_A-p_B} $, as implied by the definition of $N_1^*$ in (\ref{def:N_1_star_}).
Then, the following chain of inequalities also holds:	
\begin{align*}
P\left[D \leq d^*\right] = P[D-E[D]  \leq d^{*}-E[D]]  \leq P\left[\vert D-E[D] \vert \geq E[D]-d^* \right],
\end{align*}
where the last inequality holds due to  (\ref{eq:aux_negative}). % the last term of above chain of equalities is bounded from above by %$P[\vert D-E[D] \vert \geq d^*(c)]$. By Hoeffding's inequality the latter is upper bounded by
Moreover, by Hoeffding's inequality (\cite{hoeffding}), 
\begin{align*}
P\left[\vert D-E[D] \vert \geq E[D]-d^* \right] & \leq 2 exp\bigg(-\frac{(E[D]-d^* )^2}{2N_1}\bigg) \\
&  = 2 exp\bigg(-\frac{(N_1(p_A-p_B)-d^* )^2}{2N_1}\bigg)  \leq \varepsilon,
\end{align*}
where the last inequality holds by~(\ref{def:N_1_star_}) and Ineq.~(\ref{def:N_1_assumption}). Combining the last two chains of inequalities, we obtain that
\begin{align*}
P\left[D \geq  d^*\right]  \geq P\left[D > d^*\right] \geq 1-\varepsilon.
\end{align*}
Accordingly, with probability $1-\varepsilon$, no citizen will vote in the second round. Given this outcome, citizens in the first round do not want to change their sincere voting decision. On the one hand, all first-round citizens whose preferred alternative is $A$ are {content with their decisions as their preferred outcome is implemented}. On the other hand, all first-round citizens whose preferred alternative is $B$ would not obtain a better outcome by switching their vote towards $A$ in the first round, for this would only increase $d$. This completes the proof.

\end{proof}
%From the proof of Lemma \ref{main_lemma}---see Eq. (\ref{eq:convergence})---, we know that 
%\begin{equation*}
%\lim_{c \rightarrow 0} d^*(c) = \infty.
%\end{equation*}

%This means that gives arbitrarily low probability of failure (having $M<m(c)$), because $exp\bigg(-\frac{m(c)(p_A-p_B)}{8}\bigg)\rightarrow 0$. On the other hand if we assume that $c$ is fixed, we can still achieve arbitrarily low upper bound on the probability of failure by simply taking $m$ large enough. 
		
%Note that $f$ also depends on $p_A-p_B$. If the latter is increasing, $f$ is decreasing. This is easy to observe, if the initial support difference between the two alternatives is larger, we need smaller percentage of people to vote in the first round in order to achieve a required difference.  

%		Suppose $N_1 = fN$. 
%	We have that $E[\mathcal{X}_i] = p_A - p_B$. The goal is to study the random number $M = \mathcal{X}_1 + ... + \mathcal{X}_{Nf}$. $M$ is the difference in the votes between types $A$ and $B$ in the %first round. Since $\mathcal{X}_i$'s are independent, it holds $E[M] = Nf E[\mathcal{X}_1] = Nf(p_A-p_B)$. 
		
%		Take $f$ so that $Nf(p_A-p_B) = 2m(c)$, where $m(c)$ is the value from the statement of proposition \ref{main_result}. 

\bigskip

\begin{proof}[Proof of Corollary \ref{cor:main_theorem}:]

Given the proof of Theorem \ref{thm:main_theorem}---see (\ref{def:N_1_star_})---, it follows immediately that $N_1^*$ increases if either $\varepsilon$ or $p_A-p_B$ decreases. We now focus on changes on $c$. From the proof of Lemma \ref{main_lemma}---see Eq. (\ref{eq:convergence})---, we know that $d^*(c)$ decreases as $c$ decreases. Since $N_1$ is increasing when $d^*$ is increasing, the claim holds.

\end{proof}

\bigskip

\begin{proof} [Proof of Theorem \ref{thm:welfare}:]
	
	Under AV, the average per-capita social cost of subsidizing is $f \cdot c$, where $f$ is the expected ratio of the AG size {to} the total number of the voters 
	\begin{equation}
	\label{eq:f_definition}
	f=\mathbb{E} \left[\frac{N_1}{N_1+N_2}\right].
	\end{equation}
  Since $N_2$ is a Poisson random variable with parameter $n_2$, we can easily obtain the following upper bound for $f$:
\begin{equation}
\label{ineq:n_2}
f=N_1\cdot \sum_{k=0}^{\infty}\frac{1}{N_1+k}\frac{n_2^k}{k!e^{n_2}}\leq N_1\cdot  \sum_{k=0}^{\infty}\frac{n_2^k}{(k+1)!e^{n_2}} = \frac{N_1}{n_2}\cdot  \left(1-\frac{1}{e^{n_2}}\right). 
\end{equation}
In particular, for a fixed $N_1$, we have $\lim_{n_2\rightarrow \infty} f =0$. Next, according to Theorem \ref{thm:main_theorem}, if $N_1\geq N_1^*(\varepsilon,c,p_A-p_B)$, the outcome will be fully determined by AG with probability $1-\varepsilon$. Therefore,
	\begin{align}
	\label{ineq:one}
	W^{AV}\geq (1-\varepsilon)\cdot \left(w_d(N_1, n_2) - c f \right) + \varepsilon \cdot (-c),
	\end{align}
	where $\varepsilon>0$ and $w_d(N_1, n_2)$ is the expected average welfare (in the entire population) obtained from the alternative implemented when members of AG,  a group of size $N_1$, vote sincerely. It is easy to verify that
		\begin{equation}
	\label{eq:welfare}
	w_d (N_1, n_2) = p_A  \cdot (1-z_d(N_1, n_2)), 
	\end{equation}
	with 
	\begin{equation}
	\label{eq:limit_welfare}
	\lim_{N_1 \rightarrow \infty} z_d(N_1, n_2)=0.
	\end{equation} 
	Hence, there is $\varepsilon^*>0$ such that for all $N_1\geq N_1^*(\varepsilon^*,c,p_A-p_B)$, we derive from Ineq. (\ref{ineq:one}) that
	\begin{align}
	\label{ineq:two}
	W^{AV}>  p_A  - cf + \delta (N_1, n_2),
	\end{align}
	where 
	\begin{equation}
	\label{limit:one}
	\lim_{N_1\rightarrow \infty} \delta(N_1, n_2)=0. 
	\end{equation}
	Finally, because $p_A-p_B>0$ and due to Ineq. (\ref{ineq:n_2}) and (\ref{limit:one}), {there must be $N_1^{**} (c)$, with $N_1^{**} (c)\geq N_1^*(\varepsilon^*,c,p_A-p_B)$,} and $n^*_2 (c)$ such that if $N_1\geq N^{**}_1(c)$ and $n_2 \geq n_2^{*} (c)$,
	\begin{align*}
	p_A  - cf + \delta (N_1,n_2) > p_A-c = W^{com}
	\end{align*}
	and
	\begin{align*}
	p_A  - cf + \delta (N_1,n_2) > \frac{1}{2}-\frac{2x}{N_1+n_2} \cdot c = W^{vol},%>\frac{1}{2}-f_1 \cdot c -\delta (N_1,n_2),
	\end{align*}
	where $x$ is the solution to Eq. (\ref{eq:simultaneous_voting}). In combination with (\ref{ineq:one}) and (\ref{ineq:two}), the latter two inequalities prove that 
	\begin{equation*}
	W^{AV} > \max \{W^{vol},W^{com}\}.
	\end{equation*}
\end{proof}
%	Without loss of generality assume that $p_A\geq p_B$, then the the expected gain is for the society in case of alternative $A$ winning is $p_A$. We show that with high probability the threshold $m(c)$ will be achieved in the first round and therefore alternative $A$ will be implemented in the end, without anyone showing up in the second round. Therefore, the expected social welfare of the assessment voting is $W^{AV} = p_A-fc$. The goal is to minimize $f$, while still have a high probability of achieving the threshold difference $m(c)$ in the first round. Therefore, we have that the claim of the theorem holds since from the theorem \ref{thm:main_theorem} we know there is sufficiently large $N_1^*(\varepsilon, c)$ so that for each $N_1 > N_1^*(\varepsilon, c)$, outcome $A$ is implemented with probability at least $1-\varepsilon$, that is $W^{AV} \geq (1-\varepsilon)(p_A - fc) + \epsilon(\underline{W})> W^{com}$, for sufficiently small $\varepsilon$. On the other hand, $W^{AV} > W^{vol}$ for sufficiently small $\varepsilon$ because $p_A > \frac{1}{2}$ and from the proof of theorem \ref{thm:main_theorem} we know that the fraction $f$ needed to achieve desired outcome is arbitrarily small when $N$ is sufficiently large. 

\pagebreak

\section*{Appendix B}

%{\R[THIS NEEDS TO BE WRITTEN CORRECTLY]} {\B[Here I talk about the second round only, so changes I did were minimal, please check the proofs]}

In this Appendix, we extend the properties of AV in two directions: first, we analyze the robustness of Corollary \ref{corollary:non_existence_result} when citizens with the same preferences use different strategies; second, we investigate the performance of Assessment Voting when there are more than two alternatives.

\subsection*{Multiple citizen types}

In the main body of the paper, we have assumed that all agents who preferred alternative $A$ to $B$ used the same strategy. In particular, in our analysis of game $\mathcal{G}^2(d)$, we considered that all citizens played according to one of two strategies: $\alpha_A$ for citizens whose preferred alternative is $A$ and $\alpha_B$ for citizens whose preferred alternative is $B$. In this section, we assume that citizens of type $A$ and $B$ may be of different (sub)types, which are given exogenously. 

More specifically, for a given integer $T\geq 1$, let $\mathbb{S}^T=\{(\rho_k)_{k=1}^T \vert \rho_1,\ldots,\rho_T\geq 0, \sum_{k=1}^T \rho_k=1\}$ denote the $T$-simplex. Then, we assume that there exist $\rho_A=(\rho_A^k)_{k=1}^{T^A} \in \mathbb{S}^{T^A}$ and $\rho_B=(\rho_B^k)_{k=1}^{T^B}\in \mathbb{S}^{T^B}$, with $T^A,T^B\geq 1$, such that any citizen $i$'s probability of being of (sub)type $t^k_A$ is equal to $p_A\cdot \rho^k_A$. We shall assume that citizens of different (sub)types may use different strategies, i.e, they may randomize between going to the ballot box or not, using different probabilities. Accordingly, we have $\alpha_{A,k}$, with $\alpha_{A,k}\in [0,1]$, denote the probability according to which citizens of type $t^k_A$ will turn out (and then vote for alternative $A$). In turn, $\alpha_{B,k}$ can be analogously defined for $B$-supporters. By the properties of the Poisson probability distribution, in the second round of AV, the number of citizens of each (sub)type $t^k_A$ is a Poisson random variable with parameter $n_2\cdot p_A \cdot \rho_A^k \cdot \alpha_{A,k}$, which we denote by $x_{A,k}$. Similarly, the number of citizens of each (sub)type $t^k_B$ is a Poisson random variable with average $n_2\cdot p_B \cdot \rho_B^k \cdot \alpha_{B,k}$, which we denote by $x_{B,k}$. We recall that $d^*(c)$ has been defined as the (minimal) threshold guaranteeing that if $d\geq d^*(c)$, no citizen will turn out in the second round of AV. We can prove the following result, which generalizes Corollary \ref{corollary:non_existence_result} to a setting with multiple citizen types.

%In this section we extend our result about the second round of the assessment voting, proposition \ref{main_result}, to a more general equilibrium concept. In the following setting we assume that there are $t$ subgroups among type $A$ voters. In a subgroup $i$ voters vote with probability $\alpha^i_A$, while the probability of a voter to be of subtype $i$ is $p_A^i$. Therefore, the sum of the probabilities $\sum\limits_{i=1}^tp_A^i$ is equal to $1$. We have the same setting as before, number of voters in each subgroup $i$ is a Poisson random variable with average $n\cdot p_A \cdot p_A^i \cdot \alpha^i_A$, which we denote by $x_i$. Alternative representation of the subtypes can be done by splitting probability $p_A$ into the probabilities $\{p_A^i:i=1,...,t\}$, so that $\sum\limits_{i=1}^{t} p_A^i = p_A$.

%Note that $m(c)$ is the same as in proposition \ref{main_result}. Therefore, similar theorem to \ref{thm:main_theorem} can be proved in this more general setting. 

\begin{proposition}\label{proposition:non_existence_result_multiple_types}
%	\label{general_proposition}
	%For each $c$ and any (finite) number of subtype groups among the type $A$ and $B$ voters, there is large enough $m(c)$ that for any difference $m>m(c)$ in the first round, no-show equilibrium is the only equilibrium of the second round. 
	Assume that there are $T^A$ (sub)types of $A$-supporters and $T^B$ (sub)types of $B$-supporters. For any cost $c$, with $0<c<1/2$, if $d\geq d^*(c)$, the only equilibrium is the no-show equilibrium. 
\end{proposition}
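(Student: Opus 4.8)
The plan is to reduce the multi-type game to the single-type game already analyzed in Section~\ref{subsec:second_round}, and then to invoke Proposition~\ref{main_result} and Lemma~\ref{main_lemma} essentially verbatim. The first and central step is the observation that, from the viewpoint of the pivot calculation, only the \emph{aggregate} expected vote counts matter. Write $X_A:=\sum_{k=1}^{T^A} x_{A,k}$ and $X_B:=\sum_{k=1}^{T^B} x_{B,k}$ for the total expected numbers of second-round votes cast for $A$ and for $B$. Since the number of voters of each subtype $t^k_A$ is an independent Poisson random variable with parameter $x_{A,k}$, the total number of second-round $A$-votes is a sum of independent Poisson variables; I would use the superposition property of the Poisson distribution---which is exactly where the multinomial theorem enters, via $\sum_{m_1+\cdots+m_{T^A}=m}\binom{m}{m_1,\ldots,m_{T^A}}\prod_k x_{A,k}^{m_k}=X_A^m$---to conclude that this total is itself Poisson with parameter $X_A$, and likewise that the total number of $B$-votes is Poisson with parameter $X_B$.

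Having established this, the key consequence is that the expected gain from voting of \emph{any} citizen depends on the strategy profile only through the pair $(X_A,X_B)$ together with the citizen's preference ($A$ or $B$), but \emph{not} on the subtype. Concretely, for an $A$-supporter the expected gain is the right-hand side of the indifference condition~(\ref{type_A}) evaluated at $(x_A,x_B)=(X_A,X_B)$, and for a $B$-supporter it is the right-hand side of~(\ref{type_B}) at the same pair. Thus every subtype of $A$-supporters faces exactly the same cost--benefit comparison, and similarly for $B$-supporters, which is what collapses the multi-type problem back onto the one analyzed earlier.

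I would then argue that voting is in fact \emph{strictly} dominated for both preference groups. For $A$-supporters, the computation in the proof of Proposition~\ref{main_result} shows that, whenever $d\geq d^*(c)$, the right-hand side of~(\ref{type_A}) is strictly smaller than $c$ for \emph{every} $(X_A,X_B)\in\mathbb{R}_+^2$; hence abstaining strictly dominates voting for each $A$-subtype, forcing $\alpha_{A,k}=0$ for all $k$ and therefore $X_A=0$. Substituting $X_A=0$, the gain of a $B$-supporter collapses to $\tfrac12\big(\tfrac{X_B^{d}}{e^{X_B}d!}+\tfrac{X_B^{d-1}}{e^{X_B}(d-1)!}\big)$, which is half the right-hand side of~(\ref{0_probability_B}); the final inequality established in the proof of Lemma~\ref{main_lemma} shows this quantity is strictly below $c$ for all $X_B\geq 0$ once $d\geq d^*(c)$. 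Hence abstaining strictly dominates voting for every $B$-subtype as well, so $X_B=0$, and the no-show profile is the unique equilibrium.

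The main obstacle I anticipate is the first step: carefully verifying that the presence of several subtypes does not alter the pivot probabilities, i.e. that aggregating the subtype-specific Poisson vote counts reproduces \emph{exactly} the single-type pivot events of Section~\ref{subsec:second_round} and their probabilities. Once this reduction is rigorous, the remainder is a direct application of the two earlier results; the only additional care required is to phrase both conclusions as strict domination rather than mere non-existence of an indifference point, so that pure-voting best responses $\alpha_{A,k}=1$ and $\alpha_{B,k}=1$ are excluded along with the mixed ones.
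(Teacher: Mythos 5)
Your proposal is correct and follows essentially the same route as the paper's own proof: the central step in both is the superposition property of independent Poisson random variables (established via the multinomial theorem), which collapses the multi-subtype pivot probabilities to the single-type expressions so that the inequalities from Proposition~\ref{main_result} and Lemma~\ref{main_lemma} apply verbatim. Your explicit phrasing in terms of strict domination (to exclude pure voting strategies) and your separate treatment of the $B$-side after forcing the aggregate $A$-turnout to zero merely spell out what the paper handles in a footnote.
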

\begin{proof}
	
	Let $\mathbb{N}$ denote the set of non-negative integer numbers. The fact that the no-show strategy profile is an equilibrium is trivial, provided that $d^*(c)\geq 2$. To show that this is the unique equilibrium, we  distinguish two cases.%consider a double induction argument on $T^A$ and $T^B$.
	
	\textbf{Case I: $T^A \geq 1$ and $T^B=1$}\nopagebreak
	
	For all voters of type $A$, regardless of their subtype, the indifference condition between turning out and abstaining is the following: 
\begin{align}
\label{indifference_condition}
2c & = \sum_{\left(k_1,..., k_{T^A}\right)\in \mathbb{N}^{T^A}}\prod\limits_{r=1}^{T^A}\frac{{x_{A,r}}^{k_r}}{e^{x_{A,r}}k_r!} \cdot \bigg(\frac {{x_B}^{\sum_{s=1}^{T^A}k_s+d}}{e^{x_B}(\sum_{s=1}^{T^A}k_s+d)!} + \frac {{x_B}^{\sum_{s=1}^{T^A}k_s+d+1}}{e^{x_B}(\sum_{s=1}^{T^A}k_s+d+1)!}\bigg) .
\end{align}
Nevertheless, by Ineq. (\ref{eq:aux_1})---see the proof of Proposition \ref{main_result}---, we obtain that for all $d\geq d^*(c)$ and all  $x_B\in \mathbb{R}_+$,
\begin{align}
\label{eq:ineq_case_A}
& \sum_{\left(k_1,..., k_{T^A}\right)\in \mathbb{N}^{T^A}}\prod\limits_{r=1}^{T^A}\frac{{x_{A,r}}^{k_r}}{e^{x_{A,r}}k_r!} \cdot \bigg(\frac {x_B^{\sum_{s=1}^{T^A}k_s+d}}{e^{x_B}(\sum_{s=1}^{T^A}k_s+d)!} + \frac {x_B^{\sum_{s=1}^{T^A}k_s+d+1}}{e^{x_B}(\sum_{s=1}^{T^A}k_s+d+1)!}\bigg)  \notag \\
<& \sum_{\left(k_1,..., k_{T^A}\right)\in \mathbb{N}^{T^A}}\prod\limits_{r=1}^{T^A}\frac{{x_{A,r}}^{k_r}}{e^{x_{A,r}}k_r!} \cdot 2c = 2c,
%& = \sum_{k=0}^{\infty} \bigg(\frac {x_B^{k+d}}{e^x_B(k+d)!} + \frac {x_B^{k+d+1}}{e^x_B(k+d+1)!}\bigg) \sum_{\substack{\left(k_1,..., k_{T^A}\right)\in \mathbb{N}^{T^A} \\  k_1+\ldots+k_{T^A}=k}}\prod\limits_{r=1}^{T^A}\frac{{x_{A,r}}^{k_r}}{e^{x_{A,r}}k_r!} \\
\end{align}
where the second inequality holds from the claim that 
\begin{equation}
\label{eq:claim}
\sum_{\left(k_1,..., k_{T^A}\right)\in \mathbb{N}^{T^A}}\prod\limits_{r=1}^{T^A}\frac{x_{A,r}^{k_r}}{e^{x_{A,r}}k_r!} = 1.
\end{equation}
Assuming Eq. (\ref{eq:claim}),  Eq. (\ref{indifference_condition}) does not have a solution, and hence there cannot be an equilibrium of game $\mathcal{G}^2(d)$ in which $A$-supporters are split into $T^A$ (sub)types and each (sub)type $t^r_A$ of citizen plays according to a totally-mixed strategy $x_{A,r}$. Finally, it only remains to prove Eq. (\ref{eq:claim}). We prove the claim by induction on $T^A$. The case $T^A=1$ holds directly from the properties of the Poisson probability distribution. Hence, assume that Eq. (\ref{eq:claim}) holds for some $T^A\geq 1$. Then,
\begin{align*}
%\label{eq:claim}
\sum_{\left(k_1,..., k_{T^A+1}\right)\in \mathbb{N}^{T^A+1}}\prod\limits_{r=1}^{T^A+1}\frac{x_{A,r}^{k_r}}{e^{x_{A,r}}k_r!} &  = \sum_{k=0}^{\infty} \left[\sum_{\substack{ \left(k_1,..., k_{T^A+1} \right)\in \mathbb{N}^{T^A+1}, \\ k_{T^A+1}=k } } \left(\prod\limits_{r=1}^{T^A}\frac{x_{A,r}^{k_r}}{e^{x_{A,r}}k_r!} \cdot \frac{x_{A,T^A+1}^{k}}{e^{x_{A,T^A+1}}k!} \right) \right] \\
&  = \sum_{k=0}^{\infty} \left[\frac{x_{A,T^A+1}^{k}}{e^{x_{A,T^A+1}}k!} \sum_{ \left(k_1,..., k_{T^A} \right)\in \mathbb{N}^{T^A}  }\prod\limits_{r=1}^{T^A}\frac{x_{A,r}^{k_r}}{e^{x_{A,r}}k_r!} \right] \\
& = \sum_{k=0}^{\infty} \frac{x_{A,T^A+1}^{k}}{e^{x_{A,T^A+1}}k!} =1,
\end{align*}
where the penultimate equality holds by induction and the last equality holds due to the properties of the Poisson probability distribution.

	\textbf{Case II: $T^A \geq 1$ and $T^B\geq 1$}\nopagebreak
	
	Let us assume $T^A$ is given. We introduce further notation. Given $x_B=(x_{B,1}, ..., x_{B,T^B})$ and $k^B=(k^B_1,..., k^B_{T^B})$, we let  $P(x_B,k^B)$ denote the probability that, for each (sub)type $t^B_s$ ($s=1,\ldots,T^B$), there are exactly $k^B_s$ citizens of this (sub)type that vote, provided that citizens of type $t_B^s$ use strategy $\alpha_{B,s}$ (which leads to $x_{B,s}$). Because (sub)types are drawn independently, we obtain 
	\begin{equation*}
	P(x_B,k^B) = \prod\limits_{s=1}^{T^B}\frac{x_{B,s}^{k^B_s}}{e^{x_{B,s}}k^B_s!}.
	\end{equation*}
	Moreover, because of the multinomial theorem, we obtain that for all $m\geq 0$,
	\begin{equation}
	\label{eq:multinomial}
	\sum_{\substack{ \left(k^B_1,..., k^B_{T^B}\right)\in \mathbb{N}^{T^B}, \\ \sum_{s=1}^{T^B} k^B_s = m}} P(x_B,k^B) = \frac{\left(\sum_{s=1}^{T^B} x_{B,s}\right)^m}{ e^{\sum_{s=1}^{T^B} x_{B,s}}m!}.
	\end{equation}
For \textit{all} voters of type $A$, the indifference condition between turning out and abstaining is% now as follows:
	\begin{align}
	\label{indifference_condition_case_II}
	 2c & =  \sum_{\left(k^A_1,..., k^A_{T^A}\right)\in \mathbb{N}^{T^A}}\prod\limits_{r=1}^{T^A}\frac{{x_{A,r}}^{k^A_r}}{e^{x_{A,r}}k^A_r!} \cdot \left(\sum_{\substack{ k^B=\left(k^B_1,..., k^B_{T^B}\right)\in \mathbb{N}^{T^B}, \\ \sum_{s=1}^{T^B} k^B_s = \sum_{s=1}^{T^A} k^A_s + d}} P(x_B,k^B) + \sum_{\substack{ k^B=\left(k^B_1,..., k^B_{T^B}\right)\in \mathbb{N}^{T^B}, \\ \sum_{s=1}^{T^B} k^B_s = \sum_{s=1}^{T^A} k^A_s+d+1}} P(x_B,k^B)  \right) \notag \\
	 & =  \sum_{\left(k^A_1,..., k^A_{T^A}\right)\in \mathbb{N}^{T^A}}\prod\limits_{r=1}^{T^A}\frac{{x_{A,r}}^{k^A_r}}{e^{x_{A,r}}k^A_r!} \cdot \left(\frac{\left(\sum_{s=1}^{T^B} x_{B,s}\right)^{\sum_{s=1}^{T^A} k^A_s+d}}{ e^{\sum_{s=1}^{T^B} x_{B,s}}\left(\sum_{s=1}^{T^A} k^A_s+d\right)!} + \frac{\left(\sum_{s=1}^{T^B} x_{B,s}\right)^{\sum_{s=1}^{T^A} k^A_s+d+1}}{ e^{\sum_{s=1}^{T^B} x_{B,s}}\left(\sum_{s=1}^{T^A} k^A_s+d+1\right)!}  \right) \notag \\
	 &= \sum_{\left(k^A_1,..., k^A_{T^A}\right)\in \mathbb{N}^{T^A}} \prod\limits_{r=1}^{T^A}\frac{{x_{A,r}}^{k^A_r}}{e^{x_{A,r}}k^A_r!} \cdot \left(\frac{\sigma_B^{\sum_{s=1}^{T^A} k^A_s+d}}{ e^{\sigma_B}\left(\sum_{s=1}^{T^A} k^A_s+d\right)!} + \frac{\sigma_B^{\sum_{s=1}^{T^A} k^A_s+d+1}}{ e^{\sigma_B}\left(\sum_{s=1}^{T^A} k^A_s+d+1\right)!} \right) < 2c \notag,
	\end{align}
	where $\sigma_B:=\sum_{s=1}^{T^B} x_{B,s}$, the second equality follows from Eq. (\ref{eq:multinomial}), and the inequality follows from Ineq. (\ref{eq:ineq_case_A}) if $d\geq d^*(c)$. Because we have reached a contradiction, it must be that if $d\geq d^*(c)$, there cannot exist an equilibrium of game $\mathcal{G}^2(d)$ in which $A$-supporters are split into $T^A$ (sub)types and each (sub)type $t^r_A$ of citizen plays according to a totally-mixed strategy $x_{A,r}$, and in which $B$-supporters are split into $T^B$ (sub)types and each (sub)type $t^r_B$ of citizen plays according to a totally-mixed strategy $x_{B,r}$. This completes the proof.\footnote{The case in which some (sub)types play according to pure strategies can be proved analogously to the case considered here. We also note that although we have focussed on the case where $T_A$ and $T_B$ are finite numbers, the claim of Proposition \ref{proposition:non_existence_result_multiple_types} can be extended to the case where  $T_A$ or $T_B$ are infinite.}\end{proof}

\subsection*{Three or more alternatives}

In this section, we analyze the case of three or more alternatives. As already mentioned in the Introduction, this case has been analyzed by \cite{polborn} for one-round voluntary voting. We build on their approach to analyze AV. Specifically, we show that the negative result identified by Corollary \ref{corollary:non_existence_result} holds, regardless of the number of alternatives at hand. That is, we show that there is a threshold---which coincides with $d^*(\frac{c}{2})$ such that there is an equilibrium of the second-round voting game in which no citizen turns out, provided that the vote count difference in the first voting round is sufficiently large.\footnote{We cannot rule out the possibility that equilibria may also exist in which strategic voting occurs in the first voting round. One-round voting mechanisms also have the same drawback.}
 
%Another extension of the assessment voting properties is towards more than $2$ alternatives. We exploit and extend a model developed by \cite{polborn}. In particular, we show that even in the case of $t>2$ alternatives, in the second round the only equilibrium is no-show equilibrium, if there is one candidate winning the first round with a sufficiently large difference. Note that this is not as good result as in the case of $2$ alternatives, because in the first round there might be strategic voting, meaning that voting for the second (or even $t-1$'th) best alternative may be favorable. On the other hand, benchmark one round voting models discussed above also suffer from this strategic voting problem. {\R AM: Sounds fishy, I think it can be written much better.}

Accordingly, suppose there is a set of $m$ alternatives $A_1, A_2, ... , A_m$, denoted by $\mathcal{A}$. Citizens are of one of $m!$ possible types $(A_1, A_2, ..., A_m), ... , (A_m, A_{m-1}, ... , A_1)$, where type $(A_{i_1}A_{i_2},..., A_{i_m})$ stands for the citizen whose most preferred alternative is $A_{i_1}$, the second most preferred alternative is $A_{i_2}$, and so on. Without loss of generality, we assume that there are $V_1,\ldots,V_m$ such that each  citizen $i$ derives a utility level $V_j$ if his $j^{th}$ best alternative wins. Without loss of generality we  impose the normalization $1=V_1 \geq V_2 \geq ... \geq V_m = 0$.

As in the case of two alternatives, we assume that the number of citizens of each type $(A_{i_1}, A_{i_2}, ... , A_{i_m})$ is distributed according to a Poisson random variable with parameter $p_{i_1,i_2,...,i_m}$. As for the solution concept, we assume that $(A_{i_1}, ... , A_{i_m})$-citizens who turn out vote for alternative~$A_{i_j}$, where $1\leq j \leq m-1$, with probability $p_{i_1,i_2,...,i_m}^{i_j}$. These probabilities are exogenously given and satisfy 
\begin{equation*}
\sum_{j=1}^{m-1}p_{i_1,i_2,...,i_m}^{i_j}=1.
\end{equation*}
In particular, we assume that citizens never vote for their least preferred alternative---this assumption generalizes sincere voting in a framework with at least three alternatives. 

Accordingly, we obtain that the number of voters in favor of alternative $A_j$ is distributed as a Poisson random variable (with parameter denoted by $\eta_j$), since it is a sum of independent Poisson random variables. As a tie-breaking rule, we consider that if there are $k$ alternatives with the same number of votes combined in the two voting rounds and if the remaining alternatives have strictly fewer votes, the alternative that wins is chosen among these $k$ alternatives, each alternative having probability $\frac{1}{k}$.

Next, suppose that alternative~$A_i$ has received $a_i$ votes in the first voting round. We can assume that $a_1\leq a_2 \leq ... \leq a_m$ without loss of generality. Finally, we let $\mathcal{G}^m(d)$ denote the modification of $\mathcal{G}^2(d)$, so that citizens can now vote for any of the $m$ alternatives in any voting round. We have the following proposition:

\begin{proposition}
	\label{three_alternatives}
	For any $c$ and any vector of votes $(a_1, a_2,...,a_m)$ after the first voting round, there is $d^{**}(c)$ large enough, such that if $a_m-a_{m-1}\geq d^{**}(c)$, the only equilibrium of game $\mathcal{G}^m(d)$ is the no-show equilibrium in the second round. 
\end{proposition}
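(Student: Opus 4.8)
The plan is to show that, once $a_m-a_{m-1}$ exceeds a threshold depending only on $c$, \emph{abstaining is a dominant action} in the second round, so that the no-show profile is the unique equilibrium. I would set $d^{**}(c):=d^*(c/2)$ and prove that no citizen can gain at least $c$ by casting a single vote, whatever the remaining citizens do. Throughout, let $S_j$ denote the (Poisson) number of second-round votes for $A_j$, let $a_j+S_j$ be its final tally, and let $W_{-i}:=\max_{j\neq i}(a_j+S_j)$ be the best tally among the alternatives other than $A_i$. A single extra vote for $A_i$ alters the distribution over the winning set only when $a_i+S_i\in\{W_{-i}-1,W_{-i}\}$, and in that case the induced change in the voter's expected utility is at most $V_1-V_m=1$ in absolute value, because of the $\frac{1}{k}$ tie-breaking rule.

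The core estimate concerns a vote for any \emph{non-leading} alternative $A_i$ with $i\neq m$. Since $m\neq i$, we have $W_{-i}\geq a_m+S_m\geq a_m$, so pivotality forces $S_i\in\{W_{-i}-1-a_i,\,W_{-i}-a_i\}$, two consecutive values both bounded below by $a_m-a_{m-1}-1\geq d^{**}(c)-1=d^*(c/2)-1$. Conditioning on the profile $\omega$ of all other second-round counts (which fixes $W_{-i}$ and is independent of $S_i$), the expected gain from voting for $A_i$ is at most
\[
\mathbb{E}_\omega\!\left[\frac{\eta_i^{\,W_{-i}-1-a_i}}{e^{\eta_i}(W_{-i}-1-a_i)!}+\frac{\eta_i^{\,W_{-i}-a_i}}{e^{\eta_i}(W_{-i}-a_i)!}\right]< c ,
\]
where the bound on the bracketed term is precisely Ineq.~(\ref{eq:aux_1}) from the proof of Proposition~\ref{main_result}, applied with cost parameter $c/2$ (so that two consecutive Poisson weights at an index $\geq d^*(c/2)-1$ sum to less than $2\cdot\frac{c}{2}=c$), and then averaged over $\omega$. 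Hence, for every citizen and every strategy profile of the others, voting for any $A_i$ with $i\neq m$ is strictly worse than abstaining, so in any equilibrium no second-round vote is ever cast for an alternative other than $A_m$.

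It remains to rule out votes for the first-round leader $A_m$ itself, and here the argument closes cheaply rather than through a lossy union bound over the $m-1$ trailing alternatives. Since $d^*(c/2)\geq 2$, the hypothesis $a_m-a_{m-1}\geq d^{**}(c)$ yields $a_j\leq a_{m-1}\leq a_m-2$ for every $j\neq m$. If, in a putative equilibrium, the only second-round votes are for $A_m$ (as forced by the previous step), then each trailing alternative keeps its first-round tally $a_j\leq a_m-2$, while $A_m$ ends with $a_m+S_m\geq a_m$; thus $A_m$ wins outright by a margin of at least two in every realization, an extra vote for $A_m$ is never pivotal, and its expected gain is $0<c$. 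Consequently no citizen strictly prefers to vote for $A_m$ either (and the same margin-two observation shows the no-show profile \emph{is} an equilibrium). Combining the two steps, in any equilibrium no citizen turns out, i.e.\ the no-show profile is the unique equilibrium of $\mathcal{G}^m(d)$; for $m=2$ this recovers Corollary~\ref{corollary:non_existence_result}.

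I expect the main obstacle to be exactly the treatment of the leader $A_m$: a direct pivotality bound there would require controlling $\max_{j\neq m}(a_j+S_j)$, and a naive union bound over the trailing alternatives would degrade the threshold to $d^*\!\bigl(\frac{c}{2(m-1)}\bigr)$ and make it depend on $m$. The two-step structure sidesteps this by first eliminating all trailing alternatives as possible recipients of votes and only then noting that, once they receive none, the leader's margin is mechanically at least two. A secondary point to check carefully is that the per-event utility change is genuinely bounded by $1$ (and not by some $m$-dependent quantity) under the stated tie-breaking rule, and that the indices arising in the pivotality events are indeed two consecutive integers, so that Ineq.~(\ref{eq:aux_1}) applies verbatim.
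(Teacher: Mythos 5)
Your proof is correct and lands on the same threshold $d^{**}(c)=d^*(c/2)$, but it is organized genuinely differently from the paper's. The paper argues by induction on the number of alternatives $m$: it shows that no citizen votes for the first-round \emph{loser} $A_1$, by writing the expected gain from such a vote as a sum over all tie-sets $\mathcal{B}$ and near-tie-sets $\mathcal{C}$, bounding $P_{equal}$ and $P_{low}$ through Ineq.~(\ref{eq:aux_1}); it then deletes $A_1$ and recurses down to the two-alternative base case, where Corollary~\ref{corollary:non_existence_result} disposes of the leader's supporters. You instead eliminate \emph{all} trailing alternatives in a single stroke via a strict-dominance argument---conditioning on the other alternatives' second-round counts $\omega$ and using the independence of the Poisson count $S_i$, so that the pivot probability collapses to two consecutive Poisson weights at an index of at least $d^*(c/2)-1$---and then handle the leader $A_m$ separately with the margin-two observation. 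Both proofs rest on exactly the same analytic kernel, namely Ineq.~(\ref{eq:aux_1}) applied with cost parameter $c/2$. Your route buys the avoidance of the induction and of the combinatorial enumeration over tie configurations, plus an explicit, self-contained treatment of votes for the leader (which the paper handles only implicitly, by letting the induction bottom out at $m=2$); the paper's route buys wholesale reuse of the two-alternative machinery. Two points worth recording if you write this up: your first step needs $S_i$ to be Poisson and independent of the remaining counts, which holds by Poisson splitting of the type populations (as the paper notes when introducing the $\eta_j$); and your second step needs $d^*(c/2)\geq 2$, which is guaranteed because the paper defines $d^*(\cdot)$ as the smallest integer \emph{larger than one} satisfying~(\ref{claim:auxaux}). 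A last cosmetic caveat: for $m=2$ your argument recovers Corollary~\ref{corollary:non_existence_result} only with the weaker threshold $d^*(c/2)$ rather than $d^*(c)$, but nothing in the proposition requires the sharper constant.
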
 
\begin{proof}%[Proof of Proposition \ref{three_alternatives}:]
	The proof is based on an induction on $m$. The case $m=2$ is proven in Corollary~\ref{corollary:non_existence_result}. Now suppose that the claim of the proposition is true for the case of $m-1$ alternatives, and consider the case of $m$ alternatives. In particular, we will show that no citizen will vote for alternative~$A_1$ in any equilibrium of game $\mathcal{G}^m(d)$, where $A_1$ is the alternative that received the lowest number of votes in the first voting round. This means that instead of $m$ alternatives, it is as if there were only $m-1$ alternatives, $A_2,\ldots,A_m$. Because $a_m-a_{m-1}\geq d^{**}(c)$, we obtain by induction that the only equilibrium that survives is the no-show equilibrium.  

%	First we consider the expected utility for voting $A_1$ of a voter of type $\{A_1, A_{i_2}, ... , A_{i_n}\}$ in the second round. That is, voters for whom alternative~$A_1$ is the most preferred one. Without loss of generality we can assume that the voter type is $\{A_1, A_2, ... , A_n\}$. This simplifies the exposition considerably. We use index $i$ and the alternative~$A_i$ interchangeably. 
	
We distinguish two cases, which correspond to the cases in which one more vote in the second voting round in favor of alternative $A_1$ will make a difference in the final outcome. In both cases, we let $i$ be a citizen of type $(A_1, A_{i_2}, ... , A_{i_n})$. It will suffice to consider this type of citizen.\footnote{The argument of the proof can be easily adapted for all other types of citizens.}

	\textbf{Case I:} \textit{In the two voting rounds combined, alternative $A_1$ received exactly the same number of votes as each of  the alternatives of a given (non-empty) set $\mathcal{B}$, with all  alternatives in $\mathcal{A}\setminus(\mathcal{B}\cup \{A_1\})$ receiving strictly fewer total votes than those in $\mathcal{B}$.} \nopagebreak 
	
	In this case, with one additional vote in the second voting round, $A_1$ will win without ties. Accordingly, the expected gain that citizen $i$ derives from voting for $A_1$  in the second round is equal to
	\begin{equation*}
	%\label{eq:benefit_case_one}
	H(\mathcal{B}):=1-\frac{1}{1+|\mathcal{B}|} \cdot \left(1+\sum_{j\in \mathcal{B}}V_{j}\right).
	\end{equation*}
	Let $x$ denote the total number of votes received by $A_1$ and alternatives from $\mathcal{B}$ in the two voting rounds combined. It is straightfoward to verify that $x\geq a_m$, because alternative~$A_m$ already has $a_m$ votes from the first round (the highest number among all alternatives). Then, the probability of having an alternative of set $\mathcal{B}$ winning the voting after both rounds (excluding $i$'s vote) is 
	\begin{align*}
	P_{equal}(\mathcal{B}):=\sum_{x=a_m}^{\infty}\left(\prod_{j \in \mathcal{B}\cup \{A_1\}}\frac{\eta_j^{x-a_j}}{e^{\eta_j}(x-a_j)!}\cdot P(x,\mathcal{A}\setminus (\mathcal{B}\cup \{A_1\})) \right),
	\end{align*}
	where $P(x,\mathcal{S})$ denotes the probability that alternatives in set $S$ \textit{all} receive strictly fewer votes than $x$. Let  $s$ denote the size of any arbitrary set $\mathcal{S}$. It is easy to verify the following: 
	\begin{equation}
	\label{eq:auxiliary_p}
	P(x, \mathcal{S}) = \sum_{\substack{(l_1,..., l_s)\in \mathbb{N}^s, \\ l_j+a_{s_j}<x, j=1,\ldots,s}}\prod_{r=1}^{s}\frac{\eta_{s_r}^{l_r}}{e^{\eta_{s_r}} l_r!}.	
	\end{equation}

		\textbf{Case II:} \textit{In the two voting rounds combined, alternative $A_1$ received one vote less than each of the alternatives of a given (non-empty) set $\mathcal{C}$, with all  alternatives in $\mathcal{A}\setminus(\mathcal{C}\cup \{A_1\})$ receiving strictly fewer total votes than those in $\mathcal{C}$.} \nopagebreak
		
			In this case, with one additional vote in the second voting round, there is a chance that $A_1$ will be chosen. Accordingly, the expected gain that citizen $i$ derives from voting in the second round in favor of $A_1$ is equal to
		\begin{equation*}
		%\label{eq:benefit_case_two}
		F(\mathcal{B}):=\frac{1}{1+|\mathcal{C}|} \cdot \left(1+\sum_{j\in \mathcal{C}}V_j\right)-\frac{1}{|\mathcal{C}|} \cdot \sum_{j\in \mathcal{C}}V_j,
		\end{equation*}
		which is always a non-negative number since $\text{max}_{j\in \mathcal{C}}V_j \leq 1$.  Let $x+1$ now denote the number of total votes received by each of the alternatives in set $\mathcal{C}$ in the two voting rounds combined. That is, alternative~$A_1$ has received $x$ votes in both rounds combined, and it must be that $x\geq a_m$.  Then, the probability of having an alternative of set $\mathcal{C}$ winning the voting after both rounds (excluding $i$'s vote) is 
		\begin{align*}
		P_{low}(\mathcal{C}) := \sum_{x=a_m}^{\infty} \bigg(\frac{\eta_{1}^{x-a_1}}{e^{\eta_1}(x-a_1)!}\cdot \prod_{j\in \mathcal{C}}\frac{\eta_j^{x+1-a_j}}{e^{\eta_j}(x+1-a_j)!} \cdot P(x,\mathcal{A}\setminus(\mathcal{C} \cup \{A_1\}))\bigg),
		\end{align*}
		where $P(x,S)$ has been defined in Eq. (\ref{eq:auxiliary_p}).

\medskip
		
Finally, let $2^{\mathcal{A}}$ denote the power set of $\mathcal{A}$. Then, the indifference condition for citizen $i$ that equalizes the expected gain of voting for alternative~$A_1$ and the cost of voting is 
\begin{equation}
\label{eq:indifference_condition_three_alt}
c = \sum_{\mathcal{B}\in 2^{\mathcal{A}\setminus \{A_1\}}\setminus \emptyset}P_{equal}(\mathcal{B}) \cdot H(\mathcal{B})+\sum_{\mathcal{C}\in 2^{\mathcal{A}\setminus \{A_1\}}\setminus \emptyset}P_{low}(\mathcal{C})\cdot F(\mathcal{C}).
\end{equation}
By Ineq. (\ref{eq:aux_1})---see the proof of Proposition \ref{main_result}---, if $d\geq d^{**}(c)$ and for all $y\in \mathbb{R}_+$ and $k\geq0$, it holds that
\begin{equation}
\label{eq:aux_1_bis}
\frac{y^{k+d}}{e^{y}(k+d)!}<\frac{c}{2},
\end{equation}
If we now assume that $a_m-a_{1}\geq d^{**}(c)$, %---and hence $a_m-a_{j}\geq d^{**}(c)$ for all $j\in \{1,\ldots,m-1\}$---and we let $j^*$ be such that, for all $x\geq a_m$,
%\begin{equation*}
%\frac{\eta_{j^*}^{x-a_{j^*}}}{e^{\eta_{j^*}}(x-a_{j^*})!} \geq \frac{\eta_{j}^{x-a_j}}{e^{\eta_j}(x-a_j)!} \mbox{ for all }j\in \{1,\ldots,m-1\}.
%\end{equation*}
 then because $H(\mathcal{B})$ and $F(\mathcal{C})$ are at most one and all the events described in the calculations of $P_{equal}$ and $P_{low}$ are disjoint, we have: 
%are $\mathcal{B}^*\in 2^{\mathcal{A}\setminus \{A_1\}}$ and $\mathcal{C}^*\in 2^{\mathcal{A}\setminus \{A_1\}}$ such that
\begin{align*}
& \sum_{\mathcal{B}\in 2^{\mathcal{A}\setminus \{A_1\}}\setminus \emptyset}P_{equal}(\mathcal{B}) \cdot H(\mathcal{B})+\sum_{\mathcal{C}\in  2^{\mathcal{A}\setminus \{A_1\}}\setminus \emptyset}P_{low}(\mathcal{C})\cdot F(\mathcal{C}) \\
\leq &  \left(\sum_{\mathcal{B}\in 2^{\mathcal{A}\setminus \{A_1\}}\setminus \emptyset}P_{equal}(\mathcal{B})+\sum_{\mathcal{C}\in 2^{\mathcal{A}\setminus \{A_1\}}\setminus \emptyset}P_{low}(\mathcal{C}) \right)\leq  \left(\sum_{x=a_m}^{\infty}\frac{\eta_{1}^{x-a_1}}{e^{\eta_1}(x-a_1)!} \cdot \left( P_1(x)+P_2(x) \right)\right)
< c,
\end{align*}
where the strict inequality holds by Ineq. (\ref{eq:aux_1_bis}), and we also have that $$\sum_{x=a_m}^{\infty}P_1(x)\leq 1 \text{ and }\sum_{x=a_m}^{\infty}P_2(x)\leq 1.$$ That is, Eq. (\ref{eq:indifference_condition_three_alt}) cannot hold if $a_m-a_{1}$ is  above a certain threshold, which in fact coincides with $d^{*}(\frac{c}{2})$, and thereby is approximately four times bigger than $d^*(c)$. Letting $d^{**}(c)=d^{*}(\frac{c}{2})$ concludes the proof.\end{proof}

%One can easily verify that, in both terms , the maximum	

%	The main observation here is that in both terms $P_{equal}(\mathcal{B})$ and $P_{low}(\mathcal{C})$, the degree of $V_1$  is at least $a_t-1-a_1$. From the proof of lemma \ref{main_lemma} we have that the value $\frac{V_1^m}{e^{V_1}{m!}}$ can get arbitrarily low when increasing $m$, while all the other sums of probability products is at most $1$, because all the events described in the calculations of $P_{equal}$ and $P_{low}$ are disjoint. Values $G(\mathcal{B})$ and $F(\mathcal{C})$ are at most constants, therefore if $a_t-a_1$ is large enough, the indifference condition of the equilibrium does not hold. 
	
	%In order to finish the proof of the proposition, note that if $V_1$ is not the most preferred alternative, the above proof does not change. The only difference comes in the calculations of $G(\mathcal{B})$ and $F(\mathcal{C})$. On the other hand, these both values are upper bounded by $1$. This finishes the proof. 

\end{document}